\newtheorem{theorem}{Theorem}[section]
\newtheorem{lemma}[theorem]{Lemma}
\newtheorem{conjecture}[theorem]{Conjecture}
\newtheorem{corollary}[theorem]{Corollary}
\newtheorem{observation}[theorem]{Observation}
\newtheorem{fact}[theorem]{Fact}
\theoremstyle{definition}
\newtheorem{definition}[theorem]{Definition}
\theoremstyle{remark}
\numberwithin{equation}{section}
\begin{document}

\title{On the Complexity of Slide-and-Merge Games}

\author{Ahmed Abdelkader}

\author{Aditya Acharya}
\author{Philip Dasler}




\keywords{Games and Puzzles, Computational Complexity}

\begin{abstract}
We study the complexity of a particular class of board games, which we call `slide and merge' games. Namely, we consider \emph{2048} and \emph{Threes!}, which are among the most popular games of their type. In both games, the player is required to slide all rows or columns of the board in one direction to create a high value tile by merging pairs of equal tiles into one with the sum of their values. This combines features from both block pushing and tile matching puzzles, like \emph{Push} and \emph{Bejeweled}, respectively. We define a number of natural decision problems on a suitable generalization of these games and prove NP-hardness for \emph{2048} by reducing from \verb|3-SAT|. Finally, we discuss the adaptation of our reduction to \emph{Threes!} and conjecture a similar result.
\end{abstract}

\maketitle


.

\section{Introduction}

The original \emph{2048} game is played on a $4\times4$ grid, with numbered tiles that slide along rows or columns when the player makes a move in any of the four directions $\{\leftarrow,\rightarrow,\uparrow,\downarrow\}$. Tiles slide as far as possible in the chosen direction until they are stopped by either another tile or the edge of the board. If two tiles with the same value collide, they merge into a tile with twice the value. Note that the resulting tile cannot merge further in the same move. After each turn, a \emph{2} or \emph{2} tiles is generated in any of the empty cells. The player wins when a \emph{2048} tile is created, hence the name of the game. Otherwise, the player loses when the board is full and no merges can be performed. Figure \ref{fig:2048_move} shows the result of making a $\downarrow$.

\begin{figure}[htb]
\includegraphics[width=0.65\textwidth]{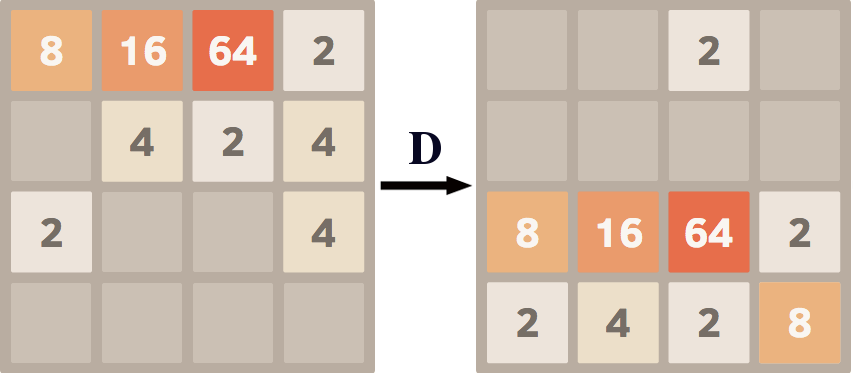}
\caption{The result of taking the \textbf{Down} action.  Notice that the two 4s on the right have merged and a new 2 has been randomly inserted.}\label{fig:2048_move}
\end{figure}

We generalize \emph{2048} to allow a rectangular grid of arbitrary size, with the same tile dynamics. We assume we are given the complete board configuration with the tiles and their values, and no new tiles are generated during the course of the game. Next, we define a number of natural decision problems related to the generalized game.
\begin{itemize}
\item Given an $m \times n$ board with some configuration of tiles, is it possible to create a tile of a certain value $v = 2^k$?
\item Given an $m \times n$ board with some configuration of tiles, is it possible to merge two particular tiles?
\item What is the answer to the above decision problems if the player is limited to making at most $k$ moves?
\end{itemize}
In particular we restrict ourselves to the first decision problem, i.e. we ask if producing a tile of value $2^{k}$ is possible. To formalize we ask the following yes-no question
\begin{definition}(\verb|2048-GAME|)
Given a starting configuration on a $n\times m$ board, is it possible to obtain a tile of value $2^{k}$ ? 
\end{definition}

The main theorem we prove in this paper is the following.

\begin{theorem}\verb|2048-GAME| is NP-hard.\end{theorem}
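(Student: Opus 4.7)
The plan is to reduce from \verb|3-SAT|. Given a 3-CNF formula $\phi$ with variables $x_1, \dots, x_n$ and clauses $C_1, \dots, C_m$, I will construct in polynomial time a starting board configuration together with a target exponent $k$ so that the resulting \verb|2048-GAME| instance is a yes-instance iff $\phi$ is satisfiable.

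The construction uses three types of gadgets assembled on a large rectangular board. A \emph{variable gadget} for each $x_i$ is a small sub-board in which the very first slide commits to $x_i = $ true or false; after the commitment, a low-value signal tile is emitted on exactly one of two exit wires corresponding to the literal that has been made true. A \emph{wire gadget} is a line of carefully spaced tiles that transports signals toward a clause gadget during the subsequent slides without producing spurious merges in idle wires. A \emph{clause gadget} for each $C_j$ receives three input wires and is designed so that it releases a specific output tile into a final \emph{collector column} iff at least one of its three inputs delivers a signal. The collector is rigged so that the $m$ clause outputs merge pairwise into a single tile of value $2^k$ with $k = \Theta(\log m)$, and this target can be assembled only if every clause contributes its output, i.e. only if $\phi$ is satisfied.

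The intended schedule of moves is: first, one or two slides per variable gadget to lock in the chosen assignment; then a fixed sequence of orthogonal slides to propagate signals through the wires, fire the clause gadgets, and compact the collector column. The forward direction of correctness follows directly from simulating a satisfying assignment with this scripted play. The converse amounts to arguing that no alternative play produces the target tile, which I ensure by using distinct high powers of two in the collector's auxiliary tiles so that value $2^k$ can appear only via the intended merge pattern.

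The main technical obstacle is the global character of a \emph{2048} move: a single slide displaces every tile on the board at once, so gadgets cannot be evaluated in isolation. I will address this by padding the board with \emph{walls} of inert blocker tiles, whose values are distinct powers of two chosen so that no accidental merges are ever possible, and by aligning variable gadgets with one axis and wire/clause gadgets with the orthogonal axis so that the commitment and propagation phases do not corrupt each other. Verifying that these separations hold across arbitrary play orders, not just the scripted one, is the most delicate step: it requires a case analysis showing that any slide outside the intended order either has no effect or strictly decreases the player's ability to reach $2^k$. Once this robustness is established, hardness follows since the construction uses only $O((n+m)^2)$ cells and can clearly be produced in polynomial time.
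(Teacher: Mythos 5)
Your overall shape (reduce from \verb|3-SAT| with variable, wire/connector, and clause gadgets on a padded board) matches the paper, but two of the ideas that actually make such a reduction work are missing, and a third ingredient is shakier than the one the paper uses. First, you propose ``one or two slides per variable gadget to lock in the chosen assignment,'' but every slide in \emph{2048} moves the entire board: if all variable gadgets are live simultaneously, a single $\leftarrow$ or $\rightarrow$ commits \emph{all} of them at once, and you cannot make $n$ independent choices. The paper's solution is a forced \emph{activation sequence}: only $x_1$ is operable initially, and assigning $x_i$ (by a horizontal merge) lines up tiles that let a subsequent $\downarrow$ activate $x_{i+1}$ via a connector gadget. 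Without some such sequencing mechanism your commitment phase does not encode an arbitrary assignment. Second, you do not address gap dynamics. Every merge deletes tiles, and the resulting gaps cause later slides to shift entire rows and columns, which is precisely how gadgets get corrupted. ``Walls of inert blocker tiles with distinct powers of two'' prevents the walls from merging but does not keep them (or anything else) \emph{aligned} once gaps appear; moreover, making all blocker values distinct forces exponentially large tile values. The paper instead uses a $2$--$4$ checkerboard lattice together with an invariant that all merges occur in multiples of $2\times 2$ blocks, so parity is preserved, the lattice never merges, and one can prove gaps migrate only to the boundary and accumulate in corners away from the core.

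Third, your final verification --- a $\log m$-round pairwise tournament merging $m$ clause outputs into a single $2^{\Theta(\log m)}$ tile --- is considerably more delicate than what the paper does, because in \emph{2048} a tile produced by a merge cannot merge again in the same move and each row or column can be arranged to witness at most one merge event; orchestrating a cascading tournament fights against exactly these constraints. The paper avoids this by placing two $1024$ tiles in one row separated by $2m$ tiles that can be merged away only when every clause column has shifted its marker down (gated by a key--lock gadget so that nothing fires before all variables are assigned); a single $\rightarrow$ then closes the gap and creates the $2048$ tile. Your soundness claim that ``any slide outside the intended order either has no effect or strictly decreases the player's ability to reach $2^k$'' is the right thing to want, but it is asserted rather than supplied with a mechanism; in the paper it follows from the one-merge-per-line lemma and the fact that once play switches between horizontal and vertical moves, earlier decisions cannot be undone. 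As written, the proposal has genuine gaps at the points where the reduction is hardest.
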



We develop a reduction from \verb|3-SAT|, with $m$ clauses and $n$ variables, to an instances of \verb|2048-GAME| having a number of tiles that is polynomial in both $m$ and $n$ with the objective of obtaining a $2048$ tile.

In \emph{Threes!}, tiles of value $1$ and $2$ merge to produce $3$. There on, a tile of value $3\cdot2^{k}$ merges with another of the same value to produce a tile of value $3\cdot2^{k+1}$. Another important distinction, as far as tile dynamics are concerned, is that the rows and columns in \emph{Threes!} slide one step at a time, unlike \emph{2048} where the tiles slide \textit{all the way}. We define similar decision problems on an $m\times n$ board, with no random tiles being generated.

\begin{definition}(\verb|Threes-GAME|)
Given a starting configuration on a $m\times n$ board, is it possible to obtain a tile of value $3\cdot k$? 
\end{definition}

We anticipate the following result using a similar reduction.
\begin{conjecture}\verb|Threes-GAME| is NP-Hard.\end{conjecture}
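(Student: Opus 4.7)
The plan is to adapt the 3-SAT reduction established for \verb|2048-GAME|, re-encoding each occurrence of a $2^k$ tile as the corresponding Threes! tile $3\cdot 2^{k-1}$, and seeding merges with a single pair of $1$ and $2$ tiles wherever a base $2$ tile appeared in the 2048 construction. Since the merging rule in both games is the same up to this relabeling -- two equal tiles of value $v$ combine into one of value $2v$, once the initial $1+2=3$ has been performed -- the overall skeleton of variable gadgets, clause gadgets, and a final merge chain producing the target $3\cdot 2^{k}$ tile exactly when a satisfying assignment is played should carry over.

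The first substantive modification addresses the different sliding dynamics: in Threes!, every move shifts tiles by exactly one cell, whereas in 2048 tiles slide as far as possible in the chosen direction. To simulate a long 2048 slide within a Threes! gadget, I would pad the corresponding row or column with enough empty cells and require the player to issue the same directional move a corresponding number of times. The board dimensions and move-sequence length each grow by a factor polynomial in the parameters of the original gadget, so the reduction remains polynomial time, and satisfying assignments still yield an honest play ending in the target tile by stepwise imitation of the 2048 strategy.

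The main obstacle I anticipate is the additional freedom the one-step rule gives the player. In 2048, a single move atomically commits to an entire cascade of slides and merges, and the original reduction implicitly exploits this atomicity to rule out spurious merges; in Threes!, the player may interrupt a long simulated slide halfway through and switch directions, potentially realizing merges unavailable in the 2048 analogue. Preventing this requires fortifying each gadget with blocker tiles of incompatible values and carefully chosen spacings, so that any partial one-step trajectory is either a prefix of an honest simulated 2048 move or lands in a dead configuration from which the target tile cannot be produced. The heart of the argument would therefore be a \emph{simulation lemma}: every legal sequence of Threes! moves on the constructed board projects to a legal sequence of 2048 moves on the original reduction instance with the same eventual merge outcomes. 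Combined with the forward simulation, this lemma would transfer NP-hardness from \verb|2048-GAME| to \verb|Threes-GAME|.
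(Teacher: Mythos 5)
First, note that the paper does not actually prove this statement; it is left as a conjecture, supported only by a sketch of Threes!-specific gadgets. So any complete argument here would go beyond the paper. That said, your proposed route differs from the paper's sketch in a way that introduces concrete problems. The paper designs gadgets \emph{natively} for the one-step dynamics (single-step merges of $3$-tiles, a background lattice of $1$-tiles that can never merge, and clause gadgets spread far apart so that persistent gaps cannot travel between them), whereas you propose to \emph{simulate} the ``slide all the way'' semantics of 2048 by padding rows with empty cells and issuing the same directional move repeatedly. This runs directly against the invariant that makes the 2048 reduction work: there the initial board has \emph{no} gaps, gaps are only ever created at the boundary, and they accumulate harmlessly in the corners. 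Seeding the Threes! board with interior empty cells hands the player exactly the uncontrolled freedom the paper identifies as the central danger in Threes! --- persistent interior gaps that permit non-canonical merges, letting a variable act as both true and false or letting clauses be satisfied without their literal connections.

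The second gap is a synchronization failure in your simulation lemma. Every move in either game is applied to the entire board at once. In 2048 this is consistent because every row completes its full slide in one move; but if you simulate a long slide by $t$ one-step moves, then during those $t$ moves every \emph{other} row and column also advances by one step per move, and different gadgets generally require different numbers of steps to complete their intended shifts. The rows desynchronize, and there is no longer a well-defined projection from a Threes! move sequence to a 2048 move sequence. (The paper makes essentially this criticism of another construction in Appendix~\ref{app1}: activation sequences of different lengths are problematic precisely because each move acts globally.) Your ``blocker tiles and careful spacings'' would have to resolve both of these issues, and that is where the entire difficulty of the reduction lives; asserting the simulation lemma without a mechanism for containing interior gaps and keeping all gadgets in lockstep leaves the argument essentially where the paper's conjecture already stands.
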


\section{Related Work}

Some of the older games involving matching for example, \emph{Bejeweled, Candy
Crush Saga} etc. have been proven to be hard. \cite{match3}
showed that they are NP-Hard for a number of questions like determining
if the player can reach a certain score or play for a certain number of
turns. The reduction is from \verb|1-in-3 Positive 3SAT| where a certain
set of tiles cannot be merged unless a prespecified shift is applied to
their neighborhood. The amount of shift is related to the number of
clauses in the \verb|3SAT| instance, which in turn are encoded using
a pattern of tiles that may be cleared by an appropriate assignment
of variables. Such assignments are manifested by the choice of which
tiles to merge, where the player is given two options for each variable
that correspond naturally to possible truth assignments of that variable.

Another class of puzzles having similar tile dynamics, are block
pushing puzzles. \emph{Push-1 } is a game where a robot has to fill
certain target cells by pushing blocks into them. In \emph{PushPush }, a variant of  \emph{Push-1 }, blocks slide all the way
until they hit another block or a wall. \cite{pushpush} proved these problems
to be NP-Hard.  This game was already known to be hard in 3D \cite{pushpush3D} by
a reduction from \verb|SAT|. The result in \cite{pushpush} embeds the graph
obtained from \cite{pushpush3D} into the plane by inserting special gadgets
to resolve path intersections.

One recent game that went viral earlier in 2014 is \emph{2048}. In this game,
two tiles of equal value merge into a single tile of double the value. The
goal is to create a single high value tile. This may be achieved by
sliding all rows or all columns in one direction. Other than this global
shifting mechanism, tile dynamics are quite similar to that of \emph{PushPush}.
In \cite{2048_pspace}, it is claimed\footnote{A brief discussion of the status of this result can be found in Appendix \ref{app1}.} that this game is PSPACE-complete by
a reduction from one form of Nondeterministic Constraint Logic. Such constructions
encode a regular directed planar graph with in-flow constraints with the objective
of flipping a certain edge or reaching a certain global configuration by flipping
edges without violating any in-flow constraints.

\section{Preliminaries}
We highlight some of the underlying assumptions and properties used throughout the construction.

\subsection{Board size} \label{basics:size}
Unlike the original game that uses a $4 \times 4$ board for \emph{2048} or \emph{Threes!}, we use boards of polynomial size. While we do not enforce square boards, this can always be achieved by padding a rectangular board, without blowing up the board size.

\subsection{No unpredictability} \label{basics:unpredictability}
In the original \emph{2048} game, after each successful move a single new tile having a value of either \emph{2} or \emph{4} is generated at a random empty cell. Our construction, on the other hand, will be fully determined by the initial board and no new tiles will be generated. In addition, the initial board will have no gaps, but gaps can be created as the result of merging tiles as mentioned in \ref{basics:lattice}.

\subsection{Checkerboard and Parity} \label{basics:lattice}
We use a \textit{static} background lattice to keep things under control. The construction guarantees that no merges can ever be performed using lattice tiles. To achieve this in \emph{2048}, we use a $2$-$4$ checkerboard lattice with all allowed merges being multiples of a $2 \times 2$ block, which effectively preserves parity and keeps the lattice intact.

\section{2048 Gadgets}\label{2048_gadgets}
We give a brief description of the gadgets we use in the construction. Note that all our gadgets use values greater than $4$ embedded in a special way inside the checkerboard lattice. Certain tiles in each gadget can be merged to communicate signals through the board by shifting other tiles inside other gadgets. We implemented our gadgets to allow the reader to try it for themselves. A brief description of the implementation and how to access it is outlined in Appendix \ref{app3}.

\subsection{Collapser Gadget}
This is the basic gadget we use to perform a single $2 \times 2$ shift, either horizontally or vertically, by merging into itself in a collapsing or folding manner. Once the merge is performed, we are left with a $2 \times 2$ checkerboard that cannot be merged any further. Figure \ref{fig:collapser} illustrates the collapser gadget, in both the horizontal and vertical configurations.

\begin{figure}[htb]
\includegraphics[width=0.5\textwidth]{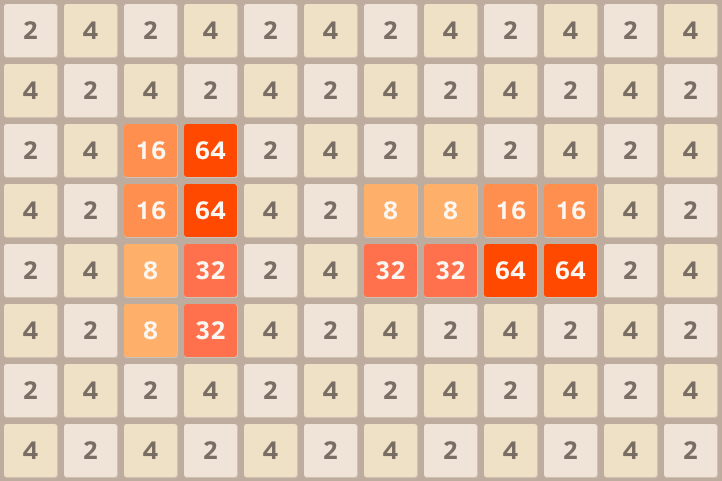}
\caption{Collapser Gadget: vertical (left) and horizontal (right) configurations.}\label{fig:collapser}
\end{figure}

\subsection{Variable Gadget}
The variable gadget is a horizontal sequence of tiles that encodes a variable assignment by 2 simultaneous $2 \times 2$ merges either to the right or left along the two rows containing the gadget. We force an \textit{activation sequence} on variables starting with $x_1$ at the bottom towards $x_n$ at the top. As we merge tiles inside the variable gadget, we also line up tiles in other columns which allows the activation of the next variable in the activation sequence and possibly satisfy clauses. We encode a \verb|TRUE| assignment by right merges and a \verb|FALSE| assignment by left merges. Note that regardless of the assignment of variable $x_i$, variable $x_{i+1}$ can still be activated by a $\downarrow$ move. This is achieved by a \textit{connector gadget} that allows a vertical shift following the horizontal shift corresponding to assigning a truth value to the variable. That vertical shift that follows can be used to activate the next variable in the sequence. In addition, variable $x_1$ comes activated in the initial board. Finally, note that the vertical separation between any two consequtive variables can be adjusted to any sufficiently large value.

\begin{figure}[htb]
\includegraphics[width=0.8\textwidth]{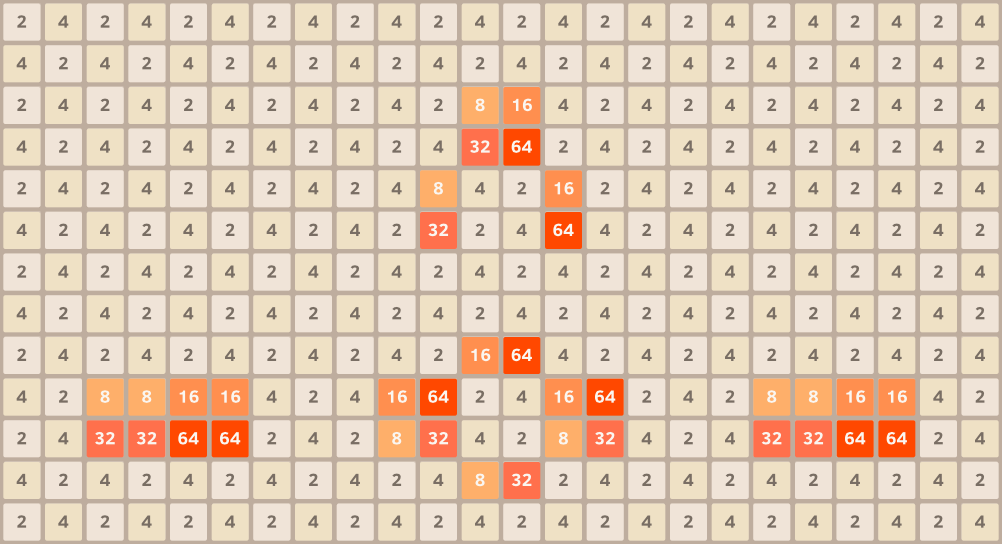}
\caption{Variable Gadget}\label{fig:variable_gadget}
\end{figure}

\subsection{Connector Gadget}\label{connector_gadget}
This gadget, shown in Figure \ref{fig:connector_gadget}, allows variable assignments to satisfy a clause. This bears similarity to the middle collection of tiles in the variable gadget, in Figure \ref{fig:variable_gadget}, and is used similarly to allow a vertical shift following the horizontal shift achieved by assigning a given variable. Note, however, that unlike the case for variable gadgets, where the middle part could be operated by both rightwards and leftwards shifts, this gadget needs to distinguish true assignments from false assignments. This is the reason why the $2 \times 2$ blocks alternate their \textit{sign} along each row.

The connector gadget has one row per variable and one column per clause. Satisfying a clause corresponds to being able to merge certain tiles along a given \textit{clause column}. This connection will be implemented in a manner similar to that of the middle portion of the variable gadget as in Figure \ref{fig:variable_gadget}, but in one of two ways corresponding to whether the literal is negated or not. By merging these tiles $\downarrow$, a special block at the top of the clause column is brought into alignment in a \textit{clause checker row}. Note that the connector gadget is only determined by $n$ and $m$. Furthermore, each literal can be correctly connected regardless of other literals.

\begin{figure}[htb]
\includegraphics[width=0.6\textwidth]{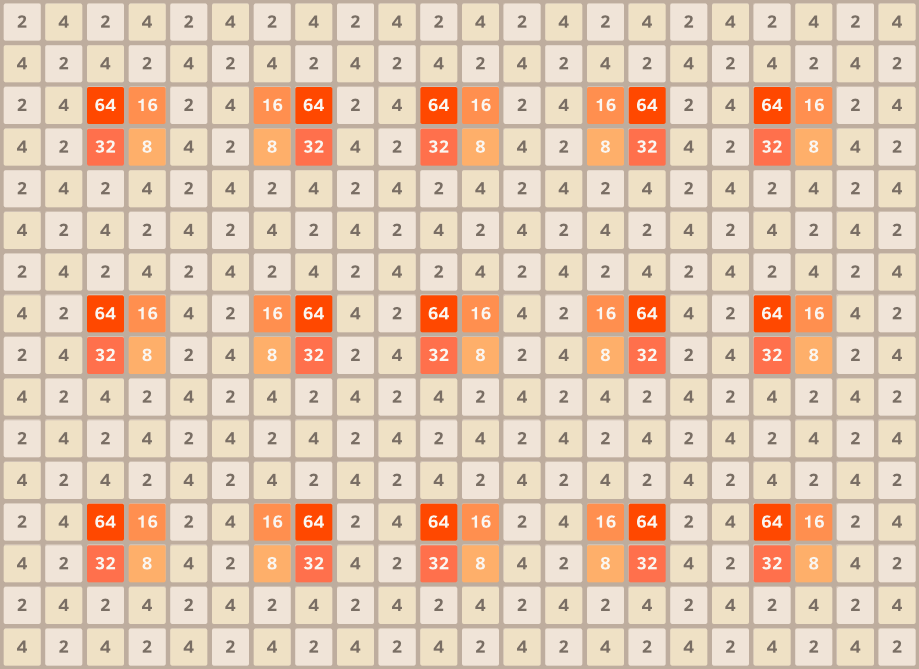}
\caption{Connector Gadget: Blocks alternate sign along each row.}\label{fig:connector_gadget}
\end{figure}

\subsection{Clause Gadget}\label{clause_gadget}
As shown in Figure \ref{fig:clause_gadget}, each clause will be represented by a column with a special block at the top and 3 literals connected in the connector gadget as described in \ref{connector_gadget}. Note that once the clause is satisfied by a correct merge along its column, no more merges in its column can be performed later by subsequent variables. This effectively \textit{leaves} the clause satisfied. Figure \ref{fig:clause_gadget_example} shows a larger example.

\begin{figure}[htb]
\includegraphics[width=0.6\textwidth]{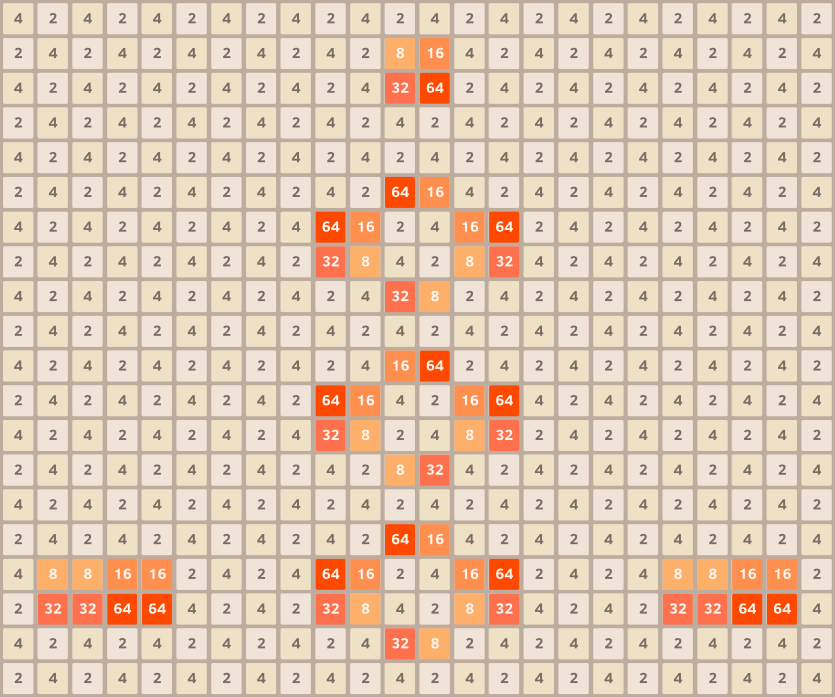}
\caption{Clause Gadget: The middle literal is negated in this example.}\label{fig:clause_gadget}
\end{figure}

\subsection{Key-Lock Gadget}
As it is more convenient to represent the verdict with a single event, we use a \textit{key-lock gadget} to force all satisfied clauses to wait, till all variable assignments are made, before they can be used for any further merges. The key-lock gadget is activated by a \textit{key}, which is just a \textit{special auxiliary variable}, that can only be activated as a result of assigning a truth value to the last variable $x_n$. Figure \ref{fig:key_lock_gadget} shows a key-lock gadget for 3 clauses. As we need to allow the clause column to pass uninterrupted through the lock gadget, we need to make two, rather than just one, $2 \times 2$ horizontal shifts to activate the lock. Note how lock activators interleave, by alternating signs. Finally, note that we actually need two $2 \times 2$ vertical shifts to activate the locks, as they cannot be on the same level as the special clause blocks in the initial configuration. This is why the gadget in Figure \ref{fig:key_lock_gadget} uses pairs of consequtive collapsers to activate the auxiliary variables, slide the keys into their locks and pull the locks down.

\begin{figure}[htb]
\includegraphics[width=0.8\textwidth]{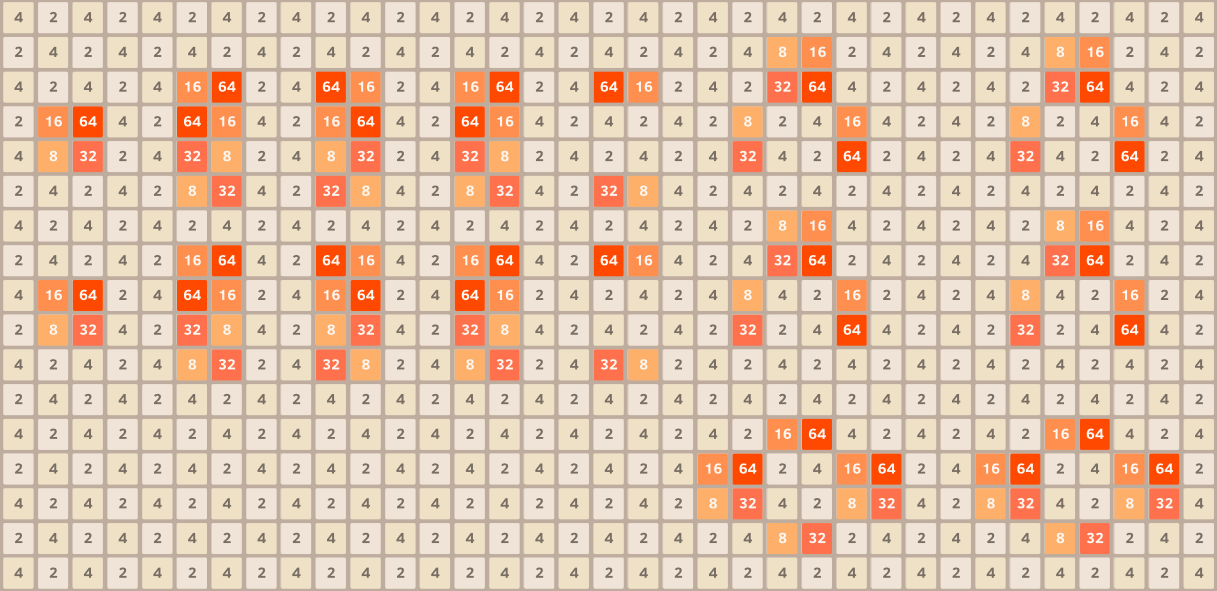}
\caption{Key-Lock Gadget for 3 clauses with the auxiliary variable to the right.}\label{fig:key_lock_gadget}
\end{figure}

\subsection{Checker Gadget}
The clause checker verifies that all clauses have been satisfied. As described in \ref{clause_gadget}, satisfying a clause enables a $2 \times 2$ downward merge that brings the special block at the top into what we refer to as the \textit{clause checker row}. This row allows the activation of a special \textit{goal collapser} when all clauses are satisfied.

\begin{figure}[htb]
\includegraphics[width=0.8\textwidth]{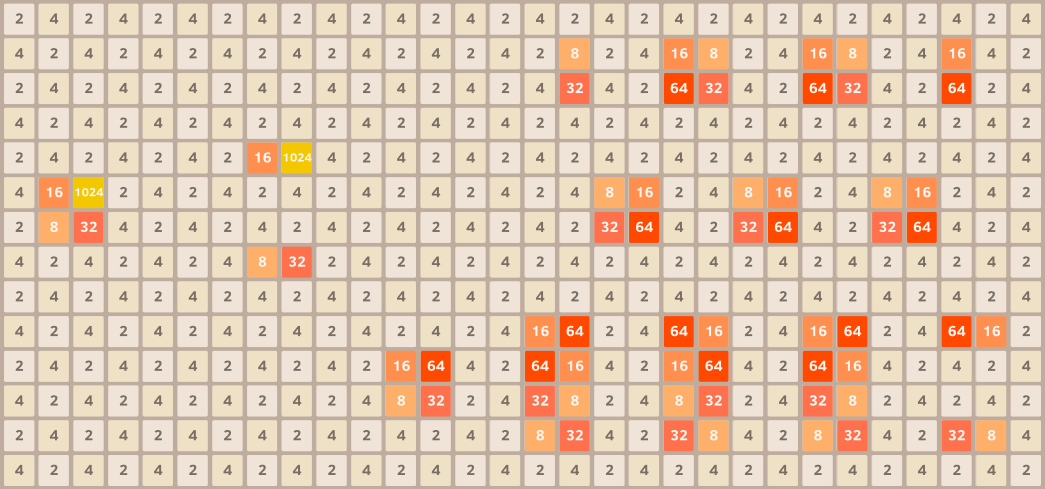}
\caption{Checker Gadget with all 3 clauses satisfied.}\label{fig:checker_gadget}
\end{figure}

\section{2048 Reduction}
In this section, we prove that the constructions described in Section \ref{2048_gadgets} work as intended.

\begin{observation}\label{obs:merges_2x2}
Merges can only occur as multiples of $2 \times 2$ merges.
\end{observation}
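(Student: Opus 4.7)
The plan is to maintain two invariants throughout play. The first is the \emph{lattice invariant}: since a $2$ cannot merge with a $4$ and every special tile carries value at least $8$, no lattice tile ever has an eligible merge partner, so the multiset of lattice tiles is preserved for all time. The second is the \emph{parity invariant}: under the coloring $(i,j)\mapsto(i+j)\bmod 2$, every $2$-tile sits on one color class and every $4$-tile on the other; I would show by induction on moves that this coloring is preserved.

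Granted these two invariants, I would argue the observation by contradiction. Suppose some move contains a single special-tile merge that is not paired with a simultaneous merge in an adjacent row (or column). That merge frees exactly one cell, and the same slide forces lattice tiles along the row to close up the gap by one position. But a one-cell shift flips the color class of each shifted lattice tile, so eventually some $2$ ends up adjacent to another $2$ (or some $4$ adjacent to another $4$). On the next slide in the perpendicular direction, those equal-valued lattice tiles would merge, violating the lattice invariant. The only way to absorb the freed cell without flipping parities is for a second special-tile pair in the adjacent row (or column) to merge simultaneously, so that the closing shift is by two positions; together these two merges constitute precisely a single $2 \times 2$ merge. Iterating across the board, every collection of merges produced by a single move decomposes into $2\times 2$ merges.

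The main obstacle I expect is making the local ``closing'' argument rigorous once earlier moves have already left holes or when merges happen adjacent to the board boundary. I would address this by strengthening the induction hypothesis: after every move, the occupied lattice cells still carry the original checker-coloring, and the set of empty cells is a disjoint union of $2\times 2$ blocks. With this stronger hypothesis available, the parity argument above applies uniformly at every candidate merge site, and the inductive step automatically forces each new wave of merges to group into $2 \times 2$ units, both preserving the checkerboard and supplying the next round's invariant.
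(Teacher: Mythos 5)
Your argument has the logical direction reversed, and as a result it is circular. You try to rule out an unpaired merge by showing that it would subsequently bring two equal lattice tiles together and hence ``violate the lattice invariant.'' But the game imposes no such prohibition: if an isolated matching pair were present and a slide brought its tiles together, the merge would simply happen, the trailing part of the row would shift by one, and the checkerboard would come out of alignment --- the lattice invariant would just fail. In the paper, preservation of the lattice is a \emph{consequence} of this observation (the lemma immediately following it cites Observation \ref{obs:merges_2x2} for exactly this), so you cannot assume it here; likewise your opening claim that no lattice tile ever has an eligible merge partner because a $2$ cannot merge with a $4$ presupposes that the checkerboard alignment is never broken, which is precisely what is at stake (two $2$s merge perfectly well once misaligned). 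What the observation actually requires is an inspection of the construction itself: the initial board, and every configuration reachable from it, only ever exposes matching special tiles inside the $2 \times 8$ blocks from which the gadgets are built, where every matching pair has a partner pair in the same row and the same pattern is duplicated in the adjacent row, and these blocks are inert until activated. That design-based check is the paper's argument, and your proposal never engages with the gadget layout at all; your strengthened induction hypothesis about empty cells and lattice coloring still says nothing about where the special tiles sit.

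There is also a mechanical error in your pairing step. Rows and columns slide independently in \emph{2048}, so a simultaneous merge in an \emph{adjacent} row does nothing to the shift distance in the row where the first merge occurred; it cannot ``absorb the freed cell'' there or make ``the closing shift two positions.'' To displace the trailing tiles of a row by two positions you need two merges \emph{in that same row}, and the adjacent row then needs its own pair of merges to stay aligned with the first --- this is why the paper insists on a second matching pair in the same row \emph{and} a duplicate of the whole pattern in the adjacent row. Your proposed unit of one merge per row across two adjacent rows would still shift each row by a single cell and break the checkerboard, so even within your own parity framework the conclusion does not follow.
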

A matching pair of adjacent tiles are aligned either with a row or a column.  Notice that in the starting construction, for every matching pair there exists a second matching pair in the corresponding row (there are no vertical matching pairs in the starting configuration).  Thus, any merger that occurs will cause $2c$ merges, for some constanct $c$ (in this case, $c=2$).  Additionally, every matching pair has a second matching pair in an adjacent row.  Because of this, there will always be two rows which undergo the $2c$ merges described above, leading to every merger consisting of $2 \times 2 \times c$.  In other words, the existing $2 \times 8$ blocks of matching tiles will cause $c(2 \times 2)$ merges for each possible merger in the starting configuration.

The rest of the construction consists of similar $2 \times 8$ blocks of matching tiles, but they begin with the $2 \times 2$ tile block in their centers offset by two spaces.  This, in essence, renders each of these blocks inert until they are activated by shifting the center back into position.  When the centers are in position, we again get a structure as outlined above, leading to $c(2 \times 2)$ mergers. 

The construction is built so that each $2 \times 8$ block (barring those for the first variable) is inert until it has been activated by some other merger.  Every activation requires a $2 \times 2$ merger, every activated $2 \times 8$ block can cause a $2 \times 2$ merger, and these activations are chained together throughout the construction.  

\begin{lemma}
For any set of moves, the lattice is preserved.
\end{lemma}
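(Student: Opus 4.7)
The plan is to argue by induction on the number of moves played, with the invariant that every lattice cell remains occupied by its original lattice tile (value $2$ or $4$) of the correct checkerboard parity. The base case is immediate from the definition of the starting board, so everything reduces to the inductive step: given that the lattice is intact before a move, I need to show it is intact after.

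For that step I would combine three ingredients. First, in a 2-4 checkerboard every pair of orthogonally adjacent lattice tiles has distinct values, so two lattice tiles can never merge with each other. Second, by design every gadget (non-lattice) tile carries value strictly greater than $4$, so a gadget tile and a lattice tile can never form a matching pair either. Together these two facts say that a legal move performs merges exclusively between gadget tiles; in particular, the multiset of lattice tiles on the board is unchanged by a move, and it only remains to verify that each surviving lattice tile returns to a cell of the correct parity.

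The one quantitative step, and where I expect the only real subtlety, is an \emph{even translation} claim. By Observation \ref{obs:merges_2x2}, every merge event occurring in a slide is (a multiple of) a $2\times 2$ collapse, which in the sliding direction evacuates a contiguous strip of width two. Any lattice tile that subsequently slides past one or more such strips is therefore translated by a sum of even amounts, hence by an even total distance, and an even translation preserves checkerboard parity. Combined with the two facts above, this places each lattice tile back on a cell of the correct colour with its correct value, closing the induction.

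The main obstacle will be handling cascades of merges inside a single slide, where several $2\times 2$ collapses on the same row (or column) chain together. The argument I would give is that Observation \ref{obs:merges_2x2} applies to every merge event individually, so each cascade step contributes an even-width evacuated strip and an even shift; since a sum of even shifts is even, the parity invariant is not broken by multiple merges in the same move. Once this is in place, the rest of the proof is bookkeeping.
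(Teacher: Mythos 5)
Your proof is correct and takes essentially the same route as the paper: both arguments rest on Observation \ref{obs:merges_2x2} to conclude that every displacement caused by a move is even and therefore preserves the $2$-$4$ checkerboard alignment, so lattice tiles can never become matching neighbours. The extra details you supply (the induction framing and the remark that gadget tiles have value greater than $4$ and hence cannot match lattice tiles) only make explicit what the paper's shorter proof leaves implicit.
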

\begin{proof}
The lattice consists of a tiling of a set of $2 \times 2$ tile blocks, with alternating values of $2$ and $4$.  So long as the edges of these $2 \times 2$ tiles are aligned, they will never merge.  Given Observation \ref{obs:merges_2x2}, all merges occur in multiples of $2$, meaning that these blocks will never be out of alignment, that they can never merge, and that the lattice will be preserved.
\end{proof}

\begin{observation}\label{obs:gap_boundary}
Gaps can only be created at the boundary.
\end{observation}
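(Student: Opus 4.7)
My plan is to combine the semantics of a single slide with induction on the number of moves. For a slide in direction $D$, the game processes each row or column perpendicular to $D$ independently: all tiles in the strip shift toward the board edge in direction $D$, and equal adjacent tiles merge en route. The result is that every remaining tile in the strip is packed contiguously against the $D$-edge, so every empty cell in the strip is packed contiguously against the opposite edge. Since the opposite edge lies on the board boundary, every empty cell immediately after the slide belongs to a contiguous run of empty cells reaching the board boundary.

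I would then induct on the number of moves. The base case is immediate since the initial board is gap-free (Section~\ref{basics:unpredictability}). For the inductive step, the per-strip packing argument applies to the current slide: any pre-existing gaps, which by the inductive hypothesis already touched the boundary, are swept back to the $-D$ edge during the slide, and any newly emptied cells end up packed against the $-D$ edge as well. Hence every gap present just after the slide, in particular every one that was created by this very move, lies in a contiguous empty run that touches the board boundary.

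The main step to be careful about is the precise reading of ``created at the boundary.'' Combined with Observation \ref{obs:merges_2x2}, the natural reading is that each slide introduces new empty cells only in even-sized blocks flush with the board edge opposite $D$, and the strip-packing analysis yields exactly this behavior: because tiles in a strip cannot be left behind in the interior after sliding, newly empty cells in that strip must sit at the $-D$ end of the strip. The one subtle obstacle is ruling out that an interior cell which was filled before the slide becomes a newly isolated empty cell in the interior afterward; this is exactly what strip-packing precludes, since any empty cell in the strip after the slide is separated from the board boundary only by other empty cells of the same strip.
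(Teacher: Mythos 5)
Your proposal is correct and rests on the same core observation as the paper's own (one-line) argument: because every move slides all tiles in each strip as far as possible toward the chosen edge, any empty cells in that strip are necessarily packed against the opposite edge, i.e.\ at the end of the row or column. The induction on the number of moves and the careful reading of ``created'' are elaborations the paper omits, but they do not change the approach.
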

Any move made in the \verb|2048| game will, by definition, slide all tiles as far as possible in the selected direction.  Because of this, a gap will only ever be created at the end of a row or column.

\begin{fact}\label{cor:gaps_from_merges}
New gaps are only created when a merge occurs.
\end{fact}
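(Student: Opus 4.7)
The plan is to prove Fact \ref{cor:gaps_from_merges} by a straightforward conservation-of-tiles argument, complemented by Observation \ref{obs:gap_boundary}. The key point is that every move in \emph{2048} can be viewed as the composition of (i) a pure sliding step, which shifts each tile maximally toward the chosen direction within its row or column, and (ii) a merging step, which collapses colliding pairs of equal tiles. Sliding alone induces an injection from occupied cells to occupied cells, so the number of occupied cells, and hence the number of empty cells, is invariant under the slide.

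Next I would note that merging is the only operation during a move that can strictly decrease the total tile count: each merge takes two equal adjacent tiles and replaces them by a single tile of double value, vacating exactly one cell per merge. Therefore, if a move performs $k$ merges, the number of empty cells after the move is exactly $k$ more than before. In particular, when $k=0$ the gap count is preserved, so no empty cell can appear in a position that was previously occupied without at least one merge taking place.

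To conclude, I would combine this with Observation \ref{obs:gap_boundary} to identify where the newly created gap must sit: since every surviving tile is pushed maximally in the slide direction, the single cell vacated by each merge is necessarily located at the trailing end of that slide within its row or column. The only subtlety to be careful about is the word \emph{new}: during a merge-free move, existing gaps may be translated inside a row or column, but no additional empty cell is created. Once this is stipulated, the fact follows immediately from the conservation observation, and there is no substantial obstacle beyond fixing this convention.
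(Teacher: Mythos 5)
Your argument is correct and matches the paper's own justification: the paper likewise observes that sliding merely relocates existing gaps (one end of the row fills while the other vacates), so the gap count can only increase when tiles merge. Your version is simply a more formal statement of the same conservation-of-tiles idea.
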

Gaps on the boundary can be filled by sliding tiles, but in doing so another gap is created at the other end of the sequence of tiles.  For all intents and purposes, the gap has simply changed location.  To create a new gap (that is, to increase the number of gaps in the construction), tiles must merge together.

\begin{lemma}\label{lemma:one_merge}
No row or column can witness more than one merge event.
\end{lemma}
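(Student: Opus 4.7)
The plan is to argue by contradiction: assume row $r$ witnesses two merge events at times $t_1 < t_2$, and derive a structural impossibility from the rigidity of the $2$-$4$ checkerboard lattice together with the layout of the gadgets. By symmetry, the column case reduces to the row case after swapping axes, so I focus on rows.

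First, I would pin down what happens at time $t_1$. By Observation~\ref{obs:merges_2x2} the merge occurs as a $2\times 2$ block (or a multiple thereof), and since the lattice is preserved and lattice tiles alternate $2,4,2,4$ along every row, the two merging tiles in $r$ must be embedded large-value tiles (value $>4$). Immediately after the merge these two equal-valued tiles have been replaced by a single tile of doubled value, a gap has been pushed out to the boundary of $r$ (by Observation~\ref{obs:gap_boundary} and Fact~\ref{cor:gaps_from_merges}), and the local region is now a dead $2\times 2$ checkerboard, exactly as in the post-merge collapser of Section~\ref{2048_gadgets}.

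Next, I would show that no sequence of moves strictly between $t_1$ and $t_2$ can re-create an adjacent matching pair inside $r$. A horizontal move that itself produces no merge cannot bring two equal tiles adjacent in $r$, because in \verb|2048| any such new adjacency would be resolved by a merge during that same slide; so any intervening horizontal move leaves the matching-adjacency set of $r$ empty. Vertical moves, which slide tiles along columns, are therefore the only way new values can enter $r$. Here I would invoke the gadget layout from Section~\ref{2048_gadgets}: every large value $v$ participates in at most one potentially-pairable column-neighborhood touching $r$, and that neighborhood has already been consumed at $t_1$, so no new matching pair can land in $r$.

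The main obstacle is the vertical-move step, where generic game mechanics do not suffice and one must use specifics of the construction. The clean way to close it is to maintain a global invariant: for each value $v>4$, the positions of tiles of value $v$ are such that no legal sequence of slides can align two of them in the same row except via the unique pre-designed gadget activation, and every such activation is a single-use event that strictly shrinks the set of potential matching adjacencies. Establishing this invariant reduces to a short per-gadget check (collapser, variable, connector, clause, key-lock, checker), each routine from the figures; propagating it through arbitrary interleavings of moves is what carries the bulk of the proof.
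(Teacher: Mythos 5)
Your overall decomposition matches the paper's: (i) a merge cannot chain into a second merge within the same line, and (ii) any second merge would require new tiles to be introduced into the line by perpendicular moves, which must be ruled out. The gap is in how you close case (ii). You reduce it to a ``global invariant'' about the positions of every large value in every gadget, to be verified gadget-by-gadget and then propagated ``through arbitrary interleavings of moves,'' and you explicitly leave that propagation as the bulk of the proof. That is not a proof; it is a restatement of the difficulty, and an exhaustive per-gadget, per-interleaving case analysis is exactly the kind of argument this lemma is meant to make unnecessary. The paper closes case (ii) with a short general argument whose ingredients you already have on the table: newly introduced tiles can only have an effect on a row (i.e., create a new interior adjacency in it) if a gap exists somewhere other than on the boundary, and Observation~\ref{obs:gap_boundary} together with Fact~\ref{cor:gaps_from_merges} guarantees that gaps are only ever created at the boundary. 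You invoke both of these facts, but only for bookkeeping about where the gap lands after $t_1$, not at the one point where they are actually needed to finish the argument.

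A secondary issue concerns case (i): your claim that ``a horizontal move that itself produces no merge cannot bring two equal tiles adjacent in $r$'' does not cover the move at time $t_1$ itself. That move did merge, and a merged tile is immune to further merging only for the duration of that move; if the doubled tile comes to rest next to an equal-valued tile, it can merge on the next move, producing a second merge event in $r$. The paper rules this out by a property of the construction --- there are never four matching tiles in a line --- which your remark that the local region becomes a dead $2\times 2$ checkerboard gestures at but does not establish for all rows and columns of the construction.
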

\begin{proof}
By construction, there are never $4$ matching tiles in a line, so there will never be the case that a merger can create two new adjacent matching tiles which will, in turn, merge.  Barring this, the only way to create a second merger event is to introduce new tiles into a row or column.  For these introduced tiles to have any effect requires a gap to exist somewhere other than on the boundary but, as shown in Observation \ref{obs:gap_boundary}, this can not occur.
\end{proof}

\begin{lemma}\label{lemma:gap_corners}
Gaps accumulate at the corners and, given a large enough padding, will not interfere with the construction.
\end{lemma}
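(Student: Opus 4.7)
The plan is to track, move by move, where empty cells can lie, and then argue that the reachable set of gap locations is confined to a region near the corners whose linear extent is bounded by the total number of merges that ever occur. First I would establish the single-move observation: a move in direction $d$ slides every tile as far as possible toward the $d$-edge, so immediately after the move every empty cell in each affected row (for horizontal $d$) or column (for vertical $d$) sits flush against the edge opposite $d$. Combined with Observation \ref{obs:gap_boundary} and Fact \ref{cor:gaps_from_merges}, this means that after any move, every gap in the board is pinned against one of the four edges.

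Next I would analyze how gap positions transition between consecutive moves. Repeating a direction leaves gap positions unchanged. A move orthogonal to the previous one sweeps the pinned gaps along their current edge and deposits them against the shared corner of the two edges, which is the key mechanism by which gaps accumulate at corners. A move in the direction opposite to the previous one transports the gaps across the board to the opposite edge, but by Observation \ref{obs:merges_2x2} together with the lattice-preservation lemma, they arrive as aligned $2 \times 2$ blocks adjacent to that edge. Inducting on the move sequence, at every moment the set of gap cells is a disjoint union of axis-aligned $2 \times 2$ blocks, each anchored to some edge, with new blocks attaching to whichever corner the two most recent non-parallel directions determine.

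Then I would bound the total amount of gap area ever present. By Lemma \ref{lemma:one_merge}, each row and each column contributes at most one merge event over the entire play, and each such event creates exactly one $2 \times 2$ gap. Hence the total number of empty cells is $O(m+n)$, and the total linear extent that gap clusters can span along any single boundary is also $O(m+n)$. This gives a concrete bound on how far inward from any corner a gap cluster can reach along its edge.

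The final step is to choose the padding so that no functional gadget ever lies within this $O(m+n)$-wide band along any of the four boundaries; then, by the accumulation property above, gaps never migrate into the gadget interior under any move sequence. The step I expect to be most delicate is the orthogonal-move analysis: I need to rule out a scenario in which a gap sitting on one edge combines with a merge taking place on a perpendicular edge to let an interior tile slip across the gadget region on a later move. The $2 \times 2$-block discipline from Observation \ref{obs:merges_2x2} and the lattice lemma are the main tools here, because they force every boundary gap to be at least two cells wide and aligned with the lattice, so its propagation is discretized in steps of two; once the padding strictly exceeds the $O(m+n)$ bound derived above, no such propagation can reach the active construction.
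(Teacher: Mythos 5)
Your proposal is correct and follows essentially the same route as the paper's own proof: gaps stay pinned to the boundary, perpendicular moves push them into corners while reversing moves merely relocate them to the opposite edge, and Lemma \ref{lemma:one_merge} bounds the total number of merges (hence gaps) so that a sufficiently large padding keeps them clear of the core. Your version simply makes the paper's ``easily calculated upper bound'' explicit as $O(m+n)$ and spells out the $2\times 2$ block discipline in more detail.
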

\begin{proof}
Given \ref{obs:gap_boundary}, gaps will never occur within the active portion of the construction.  A gap on the boundary, then, can do one of two things: (i) a move perpendicular to the one that created the gap will fill it and create a new gap in a corner or (ii) a reversing move will fill the gap and create one on the opposite boundary.

For case (i), gaps in the corner can only interfere with the construction if they are large enough to share some rows or columns with the core containing all useful gadgets.  Given that no new tiles are added to the construction and with Lemma \ref{lemma:one_merge}, there is an easily calculated upper bound for the number of merges that may occur and thus a bound on the number of gaps created. We simply increase the padding around the core until it is greater than this bound.

For case (ii), alternating moves along the same direction have no effect until a perpendicular move is made.  No matter which side the gap is on, we end up back in case (i).
\end{proof}

\begin{lemma}
Effective game play alternates horizontal and vertical moves, in this particular order.
\end{lemma}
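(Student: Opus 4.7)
The plan is to prove two sub-claims after fixing terminology: call a move \emph{effective} if it changes the board state, and \emph{ineffective} otherwise. I would show (i) any effective move from the initial board must be horizontal, and (ii) an effective move of a given orientation cannot be immediately followed by another effective move of the same orientation. Since an ineffective move does not alter the board, these two facts together force the subsequence of effective moves to alternate in orientation and begin with a horizontal one.

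For sub-claim (i), I would invoke the no-gaps property of the initial configuration from Section~\ref{basics:unpredictability}: with every cell occupied, a tile can slide only if it is simultaneously consumed in a merge. The initial board contains no vertically adjacent matching pair, since all matching pairs come from the pre-activated variable~$x_1$ and sit along its two horizontal rows. Hence a vertical move produces neither merges nor motion, so the first effective move must be horizontal.

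For sub-claim (ii), I would suppose an effective horizontal move has just been performed and argue that the next horizontal move cannot cause any merge. Any row that already witnessed a merge is permanently inert by Lemma~\ref{lemma:one_merge}. For every other row, Observation~\ref{obs:gap_boundary} ensures that the preceding horizontal slide only injected gaps at the row ends, never in the row interior; consequently the sequence of tiles inside each row is merely translated by the slide, so no new internal adjacencies are created. A row with no horizontal matching pair before therefore has none afterwards. Fact~\ref{cor:gaps_from_merges} then ensures no new gaps are created either, so a follow-up horizontal move merely shuffles boundary gaps and is ineffective. Exchanging the roles of rows and columns gives the analogous statement for consecutive vertical moves.

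The main obstacle I anticipate is formalizing the order-preserving-translation invariant used in~(ii): specifically, that a single slide in a fixed orientation changes adjacencies only at a boundary end of each line, uniformly across rows both near to and far from any active gadget. Once this invariant is cleanly established from Observations~\ref{obs:merges_2x2} and~\ref{obs:gap_boundary} together with the definition of a slide, the lemma itself follows by induction on the length of the move sequence, using that an ineffective move can be deleted without affecting any subsequent board state.
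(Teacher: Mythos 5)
Your argument follows essentially the same route as the paper's: the first effective move must be horizontal because the gap-free initial board only admits the merges of the pre-activated variable $x_1$, and Lemma~\ref{lemma:one_merge} together with the boundary-gap observations rules out two consecutive effective moves of the same orientation, so runs of like moves collapse and the reduced sequence alternates. Your sub-claim (ii) is in fact argued somewhat more carefully than the paper's version, which simply asserts that a run of like moves ``has the same effect as the last move in the sequence.''

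The one substantive discrepancy is your definition of \emph{effective}. You take an effective move to be one that changes the board state, whereas the paper defines effective game play as a sequence in which each move causes tiles to merge. Under your definition the lemma is not quite true, and your sub-claim (ii) contradicts itself: after an effective horizontal move creates a gap at a row's end (Observation~\ref{obs:gap_boundary}), the opposite horizontal move slides that row back, which \emph{does} change the board state and is therefore ``effective'' by your definition while causing no merge --- yet you then declare such a move ``ineffective'' because it merely shuffles boundary gaps. To make the argument cohere you should adopt the merge-based definition, or explicitly quotient out moves that only relocate boundary gaps, which by Lemma~\ref{lemma:gap_corners} never affect the core of the construction. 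With that fix, your proof matches the paper's.
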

\begin{proof}
By effective game play we refer to a sequence of moves in which each move accomplishes something, specifically, each move either causes tiles to merge. For example, if the board configuration consists of only the lattice defined above, then there is no set of moves that would be considered effective game play as the lattice is locked in place.

To prove the lemma, first notice that in the initial configuration there are no gaps and the only merges that are possible are the ones involved in assigning a value to the first variable.  Thus, the very first move must be horizontal as this is required to make this assignment.

Given Lemma \ref{lemma:one_merge}, a repeated sequence of like moves (that is, all moves occurring either horizontally or vertically) will create no more than one merge event.  Because of this, a sequence of like moves has the same effect as the last move in the sequence and any such sequence in a set of moves can be reduced to just the last one.  When all such series of moves are reduced, one is left with alternating horizontal and vertical moves.
\end{proof}

\begin{lemma}\label{lemma:hv_move_lock}
Once the move sequence transitions from horizontal to vertical, or vice-versa, the effects of the previous decision cannot be unmade.
\end{lemma}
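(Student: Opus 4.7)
The plan is to show that a horizontal-to-vertical (or vertical-to-horizontal) transition permanently freezes the effect of any decision made in the preceding phase, by combining the one-merge-per-line budget from Lemma \ref{lemma:one_merge} with the fact that no gadget ever repopulates a previously-merged line with a fresh matching pair.

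First I would pin down what "previous decision" means: in the variable gadget it is the choice of a left versus right $2\times 2$ merge encoding \verb|TRUE|/\verb|FALSE|, and in the connector/clause and key--lock gadgets it is the choice of which column to activate by a $\downarrow$ (or $\uparrow$) merge. In every case the decision is realized as an actual $2\times 2$ merge on a specific pair of rows (if horizontal) or columns (if vertical). By Lemma \ref{lemma:one_merge}, those two lines have now exhausted their single merge event.

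Next I would argue that "unmaking" such a decision would require one of two things, each of which is ruled out. Option (a) is to perform the opposite-direction merge along the same lines (e.g.\ a leftward merge to reverse a rightward one, or an $\uparrow$ to reverse a $\downarrow$); this is forbidden by Lemma \ref{lemma:one_merge}. Option (b) is to split the newly formed tile back into its constituents; this is impossible because \emph{2048} has no split operation, and by Section \ref{basics:unpredictability} no new tiles are ever introduced to the board, so the pre-merge values simply no longer exist on the affected lines.

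Finally I would rule out a more subtle loophole: that the intervening perpendicular moves might shuttle tiles from other gadgets into the already-merged line and conspire to create a new matching adjacent pair there. Here I would lean on the checkerboard/parity property of Section \ref{basics:lattice} and on the gadget design of Section \ref{2048_gadgets}, which places active tiles only inside $2\times 8$ blocks whose centers are the unique source of matching pairs in their row or column. Since Observation \ref{obs:merges_2x2} keeps every shift a multiple of $2\times 2$, and since gaps only accumulate at the boundary (Observation \ref{obs:gap_boundary} and Lemma \ref{lemma:gap_corners}), no perpendicular move can reassemble a fresh matching pair inside an already-consumed variable, connector, or clause line. The main obstacle is precisely this last case-analysis: one must check that every interior line whose merge has been spent remains inert under all subsequent shifts, and this is what the $2\times 8$ block invariant is designed to guarantee.
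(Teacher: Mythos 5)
Your argument centers on the irreversibility of the merge itself (no reverse merge by Lemma \ref{lemma:one_merge}, no split operation, no fresh matching pair assembled by perpendicular traffic), but it misses the mechanism that the lemma is actually about: the effect of a decision in this construction is a \emph{positional alignment} of tiles, and that alignment can in principle be undone by a pure slide with no merge at all. Concretely, a rightward merge in a variable row leaves a gap at that row's left boundary; a subsequent leftward move would slide the entire row back into the gap, shifting every tile and destroying the alignment, without ever triggering a second merge or a split. Nothing in your options (a) or (b), nor in your ``reassembled matching pair'' loophole, rules this out, because the reverse slide involves no merge event and no new pair.

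The paper's proof handles exactly this as its case (i): a row or column can change either by sliding toward a boundary or by an internal merge. The slide requires a boundary gap, and by Fact \ref{cor:gaps_from_merges} such gaps arise only from merges; the point of the horizontal-to-vertical transition is that the orthogonal move \emph{plugs} that boundary gap, and since the row has spent its single merge (Lemma \ref{lemma:one_merge}) the gap never reappears, so further parallel moves are inert. This plugging argument is the substance of the lemma --- it explains why the \emph{transition} is what locks the decision --- and it is absent from your proposal. Your case (ii)-style reasoning (no second merge, no splits) is fine as far as it goes, but without the slide-back analysis the proof is incomplete.
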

\begin{proof}
In order for a row or column to change it must either: (i) slide toward a boundary or (ii) have an internal merge event.

Case (i) will not occur as it requires a gap on the boundary of the row or column.  Corollary \ref{cor:gaps_from_merges} states that new gaps will be created on the boundary only when a merge event occurs.  Once an orthogonal move occurs, this gap will be plugged and further parallel moves will have no effect.  As each row or column can only witness a single merge event (see Lemma \ref{lemma:one_merge}), these gaps will never reoccur.

Case (ii) will not occur either.  Any previous decision that has been made for a row or column entails the occurrence of a merge to assign a value to a variable, determine whether a clause has been satisfied, etc.  But again, Lemma \ref{lemma:one_merge} does not allow for multiple mergers to happen within the same row or column.
\end{proof}

\begin{corollary}
Variables are assigned consistently and thus the satisfaction of clauses is valid.
\end{corollary}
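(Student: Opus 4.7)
The plan is to combine the activation-sequence structure of the variable gadget with Lemma \ref{lemma:hv_move_lock} to show that each variable receives a single, well-defined truth value, and then argue that the connector gadget faithfully transmits that value to every clause column in which the variable appears.

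First, I would recall from Section \ref{2048_gadgets} that assigning variable $x_i$ requires a horizontal merge along the two rows occupied by its gadget (rightward encoding \texttt{TRUE}, leftward encoding \texttt{FALSE}), and that the activation sequence forces the variables to be touched in the order $x_1,x_2,\dots,x_n$, where advancing from $x_i$ to $x_{i+1}$ requires an intervening vertical move through the connector. Thus the canonical move sequence for the assignment phase is a strict alternation horizontal, vertical, horizontal, vertical, \dots, with the $i$th horizontal step committing to a truth value for $x_i$ and the following vertical step both activating $x_{i+1}$ and, by Lemma \ref{lemma:hv_move_lock}, rendering the assignment of $x_i$ irreversible. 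I would state this as the main step: after the vertical transition, no later move can either flip $x_i$'s assigned direction or simulate the opposite assignment, because such a change would require a second merge event in one of the two rows of the $x_i$ gadget, which Lemma \ref{lemma:one_merge} forbids.

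Second, I would argue that the assignment cannot be left ambiguous or bilateral. Since the variable gadget is symmetric between left and right merges but the two merges happen in rows that share the same $2\times 2$ collapser tiles, performing a right merge uses up the only matching pairs available in those rows; attempting a subsequent left merge in the same rows falls again under Lemma \ref{lemma:one_merge}. Hence exactly one of the two truth values is chosen per variable.

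Third, I would address consistency on the clause side. By the construction of the connector gadget in Section \ref{connector_gadget}, each literal occurrence of $x_i$ in a clause $C_j$ is wired so that the shift induced by the chosen horizontal assignment of $x_i$ lines up the relevant $2\times 2$ block in the $j$th clause column if and only if the assignment agrees with the sign of the literal (recall the alternating-sign pattern that distinguishes positive from negative literals). Because the alignment produced by $x_i$'s assignment is locked in once the subsequent vertical move is made, the set of clause columns in which downward merges become possible is exactly the set of clauses satisfied, under the standard semantics, by the chosen truth assignment. Combined with Lemma \ref{lemma:hv_move_lock}, this means no later variable can undo a satisfaction or spuriously create one in a clause that its assignment does not literally satisfy.

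The main obstacle I anticipate is not the locking argument itself, which follows cleanly from the preceding lemmas, but carefully verifying the sign-alternation bookkeeping in the connector so that \emph{every} $(i,j)$ pair behaves as claimed; this is essentially a case analysis on (positive vs.\ negated literal) $\times$ (\texttt{TRUE} vs.\ \texttt{FALSE} assignment), which I would either fold into the proof as a short table or delegate to the gadget figures of Section \ref{2048_gadgets}.
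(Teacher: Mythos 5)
Your proposal is correct and follows essentially the same route as the paper's proof: both rest on the gadget construction (connector wiring distinguishing positive from negated literals) together with Lemma~\ref{lemma:hv_move_lock} (backed by Lemma~\ref{lemma:one_merge}) to argue that each variable's assignment is committed irreversibly and that a clause column can only shift $\downarrow$ when a genuinely satisfying literal has been set. Your treatment is somewhat more explicit than the paper's --- in particular the argument that a variable cannot be assigned both values, and the literal-sign case analysis --- but these are elaborations of the same argument rather than a different approach.
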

\begin{proof}
By construction, clause gadgets will only be satisfied (that is, shift $\downarrow$ into place) once one of the variable rows has activated a block in the clause gadget's column.  This activation occurs only if a variable has already been assigned a value which leads to a satisfying literal for the clause.  Lemma \ref{lemma:hv_move_lock} states that once this assignment has occurred, the row representing the value for that variable can not be changed, meaning that the value of its literals can not change as they all exist in the same row as its variable. Finally, if one of the literal portions of a clause gadget allows that clause to shift $\downarrow$, then the literal must be \verb|TRUE|, the variable must be appropriately assigned, and the clause itself is truly satisfied by the current assignment.
\end{proof}

\begin{lemma}
In any winning game, every sequence of vertical moves must end in a $\downarrow$ move.
\end{lemma}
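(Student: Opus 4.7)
The plan is to show that in every winning play, each maximal run of consecutive vertical moves ends with $\downarrow$. Combined with the previous lemma that collapses such a run to its final move, this amounts to arguing that the effective vertical action at every step of the H/V alternation is $\downarrow$. I would split the argument into the terminal vertical run, which must produce the winning tile, and every earlier vertical run, which must drive the gadget cascade forward.

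For the terminal run, I would appeal to the checker gadget: its goal collapser is constructed so that the winning merge occurs only under a $\downarrow$ move, once every clause block has descended into the clause checker row. No upward alignment creates the target tile, so the last move of any winning game is $\downarrow$.

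For any earlier vertical run, I would argue by contradiction. Every signal in the construction propagates downward: activating $x_{i+1}$ after the horizontal assignment of $x_i$ is achieved by a $\downarrow$ shift of tiles sitting above the next variable's row, bringing a satisfied clause's special block into the clause checker row is a $\downarrow$ shift, and seating each key into its lock is again a $\downarrow$ shift. Suppose some run ends with $\uparrow$ as its effective vertical move. Either no pair of tiles in the active region is aligned for an upward merge, in which case the run has no effect and can be discarded (contradicting effectiveness of the run), or the $\uparrow$ triggers a $2 \times 2$ merge in the wrong direction. In the latter case, Observation \ref{obs:merges_2x2} together with Lemma \ref{lemma:one_merge} imply that the affected columns have exhausted their single permitted merge, and then Lemma \ref{lemma:hv_move_lock} prevents any subsequent recovery. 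The needed downward cascade is broken, so the next variable cannot be activated, a clause block cannot reach the checker row, or a key cannot seat into its lock, and the goal collapser can never fire, contradicting the hypothesis that the play is winning.

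The main obstacle is a gadget-by-gadget verification that no upward alignment produces useful progress and that any parasitic $\uparrow$ merge blocks a required subsequent $\downarrow$. This is essentially direct inspection of the variable, connector, clause, key-lock, and checker constructions, supported by the already-established lattice preservation, gap-at-boundary, and one-merge-per-row results, so I expect the bulk of the work to be bookkeeping rather than a new combinatorial insight.
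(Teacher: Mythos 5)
Your core idea matches the paper's, but you develop it much further than the paper does. The paper's entire proof is two sentences: the checker gadget can only verify satisfaction if the topmost block of each clause gadget has been pulled $\downarrow$ into the clause checker row, and without that shift the checker row is interrupted by lattice tiles and cannot shift the goal collapser into place. You recover this as your ``terminal run'' case, and then add a second, genuinely more careful layer the paper omits: a contradiction argument for every earlier vertical run, splitting on whether an effective $\uparrow$ does nothing (contradicting effectiveness) or fires a parasitic merge that, by Observation \ref{obs:merges_2x2}, Lemma \ref{lemma:one_merge}, and Lemma \ref{lemma:hv_move_lock}, irrevocably spends a column's single merge and breaks the downward cascade. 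That extra layer is a real improvement in completeness, since the paper's proof does not actually address why \emph{every} vertical run (as opposed to the ones feeding the checker) must end in $\downarrow$. Two caveats: first, your claim that ``the winning merge occurs only under a $\downarrow$ move'' is slightly off --- by Lemma \ref{lemma:1024_align} the final alignment and merge of the two $1024$ tiles is achieved by a $\rightarrow$ move; the terminal $\downarrow$'s role is seating the keys and pulling clause blocks into the checker row, so you should phrase that case accordingly. Second, like the paper, you defer the gadget-by-gadget verification that no $\uparrow$ alignment is ever useful; you are honest that this is the remaining work, and it is the same level of detail at which the paper itself stops.
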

\begin{proof}
In order for the checker gadget to verify the satisfaction of all clauses, the upper most block of each clause gadget must have been pulled $\downarrow$ into place.  Without such a shift, the length of the checker will be interrupted by portions of the lattice and it will be unable to merge enough tiles to shift the goal collapser into place.
\end{proof}

\begin{lemma}\label{lemma:1024_align}
The two $1024$ tiles can be aligned if and only if all clauses are satisfied.
\end{lemma}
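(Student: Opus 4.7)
The plan is to prove both directions of the biconditional by tracing how the variable, connector, clause, key-lock, checker and goal-collapser gadgets interact. I would rely on the structural lemmas already established: every merge occurs in $2\times 2$ blocks (Observation \ref{obs:merges_2x2}), each row or column fires at most one merge (Lemma \ref{lemma:one_merge}), gaps accumulate only at the corners (Lemma \ref{lemma:gap_corners}), and horizontal/vertical decisions are locked in once the direction switches (Lemma \ref{lemma:hv_move_lock}).

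For the \emph{if} direction, given a satisfying assignment $\alpha$ of the underlying 3-SAT instance, I would exhibit an explicit play that aligns the two $1024$ tiles. Starting from the initial configuration, in which $x_1$ is pre-activated, I process the variables in order $x_1,\ldots,x_n$: a horizontal move ($\rightarrow$ for $\mathtt{TRUE}$, $\leftarrow$ for $\mathtt{FALSE}$) records the truth value of $x_i$ and, through the variable and connector gadgets, lines up literal tiles in exactly those clause columns whose literal for $x_i$ is satisfied by $\alpha$; the ensuing $\downarrow$ move activates $x_{i+1}$ and simultaneously pulls the top block of each newly-satisfied clause column into the clause checker row. After $x_n$, a final horizontal/vertical pair activates the auxiliary key variable, slides the keys into their locks, and drops the locks through the key-lock gadget, completing the checker row. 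One more $\downarrow$ then fires the goal collapser, bringing the two $1024$ tiles into adjacency.

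For the \emph{only if} direction, suppose that some play aligns the two $1024$ tiles. Their alignment forces the goal collapser to fire, which in turn requires a full contiguous chain of merges along the clause checker row. For each clause $C_j$ that chain is blocked by a lattice tile in column $C_j$ unless the clause gadget's special top block has been shifted $\downarrow$ into the checker row. By the construction of the connector gadget, such a shift occurs only when some variable row has aligned a tile in column $C_j$ whose sign matches that of the corresponding literal, which in turn requires that the variable in that row was assigned a value making the literal true. The preceding corollary guarantees that those assignments are globally consistent, so extracting them from the variable gadgets yields an $\alpha$ under which every clause is satisfied.

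The main obstacle will be the scheduling argument in the forward direction: I must verify that an early horizontal assignment of $x_i$ never prematurely triggers a merge in a later variable's row or in a still-inactive clause column, and that the key-lock gadget genuinely forbids any use of the checker row before the auxiliary key is activated. The key-lock gadget's two consecutive collapser pairs are engineered precisely to enforce this ordering, and I would verify, using Observation \ref{obs:merges_2x2} and Lemma \ref{lemma:hv_move_lock}, that the second collapse of each lock cannot be performed any earlier. Once this scheduling check is in hand, both directions reduce to a careful case analysis of the gadget diagrams.
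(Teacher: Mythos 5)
Your proposal is correct and follows essentially the same route as the paper's own proof: the paper likewise argues the forward direction by observing that alignment requires $2m$ merges along the row containing the $1024$ tiles, which forces every clause block to have shifted down, and the reverse direction by exhibiting the play that assigns variables, activates the auxiliary key variable, and finishes with a single sliding move. The only small discrepancy is that the paper's final aligning move is a $\rightarrow$ along the checker row rather than a $\downarrow$; otherwise your version is the same argument with more explicit attention to scheduling, which the paper leaves to ``by construction.''
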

\begin{proof}
Let the two $1024$ be aligned at some point in our sequence of moves. By Lemma \ref{lemma:gap_corners}, the gaps created have no effect on the core of the construction. Hence the only way the two $1024$ tiles can meet, is by $2m$ merges in the row containing the left $1024$ block, where $m$ is the number of clauses in \verb|3-SAT|. This is only possible when all of the key-locks and clauses are aligned with each other. By construction this would mean all clause blocks have shifted down by $2\times2$, which implies at least one of the variable  in each clause is  assigned a value that satisfies the respective clause. \\
For the reverse suppose all of the clauses were satisfied. That would mean all the clause gadgets have moved down $2\times2$. We can activate all of the key-lock gadgets by assigning a value of \verb|TRUE| to the auxiliary variable, and shift them down by four rows so that they are aligned with the clause gadgets. A single $\rightarrow$  move would cause $2m$ merges and make the two $1024$ blocks adjacent to each other.
\end{proof}
\begin{theorem}\label{theorem:2048_nphard}
$2048$ is NP-hard.
\end{theorem}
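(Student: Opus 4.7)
The plan is to combine all the prior structural lemmas into a single correctness argument for the reduction sketched in Sections 4 and 5, and to verify that the size and description of the reduced instance are polynomial in the size of the input 3-SAT formula.

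First I would describe the reduction explicitly. Given a 3-SAT instance $\varphi$ with $n$ variables and $m$ clauses, assemble the board by placing, on top of the background $2$-$4$ checkerboard lattice from Section \ref{basics:lattice}: one variable gadget per variable $x_1,\dots,x_n$ stacked in the activation order; the connector gadget of size $O(nm)$ that routes each literal to the column of every clause in which it appears (with the correct sign pattern as described in Section \ref{connector_gadget}); one clause gadget per clause sitting at the top of its column; a key-lock gadget tied to an auxiliary variable $x_{n+1}$; a checker row; and a goal collapser whose final $\rightarrow$ merge produces a $2048$ tile from two adjacent $1024$s. Each gadget has constant width and height per clause/variable, so the total board is $O(nm)$ tiles, with padding from Lemma \ref{lemma:gap_corners} also polynomial; the whole construction is clearly computable in polynomial time.

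Next I would establish the two directions of correctness. For the forward direction, given a satisfying assignment $\sigma$ of $\varphi$, I would exhibit an explicit winning play: for $i=1,\dots,n$, apply the horizontal move ($\rightarrow$ if $\sigma(x_i)=\textbf{TRUE}$, $\leftarrow$ otherwise) to assign the variable, then a $\downarrow$ move that (a) uses the connector gadget to pull down the clause block of every clause for which $x_i$ provides a satisfying literal, and (b) activates the variable gadget for $x_{i+1}$. After processing $x_n$, activate the auxiliary variable $x_{n+1}$ to slide the keys into the locks and pull them down into alignment with the clause blocks, and finish with the single $\rightarrow$ move of Lemma \ref{lemma:1024_align} that triggers the $2m$ cascaded merges and yields the $2048$ tile.

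For the reverse direction, suppose some play produces a $2048$ tile. Since the only place in the construction where two $1024$ tiles can possibly meet is in the goal collapser row, the play must at some point align two $1024$ tiles; by Lemma \ref{lemma:1024_align} this forces every clause gadget to have shifted $\downarrow$, which in turn requires that along each clause column at least one literal position has been activated by a variable assignment. By the Corollary following Lemma \ref{lemma:hv_move_lock}, the horizontal moves made for each variable gadget induce a well-defined, consistent truth assignment (each $x_i$ was moved either $\rightarrow$ or $\leftarrow$, and Lemma \ref{lemma:hv_move_lock} forbids undoing that choice later). This assignment satisfies every clause, so $\varphi$ is satisfiable.

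The main obstacle in making this proof airtight is not any single step but the bookkeeping: I would need to argue carefully that the player has no alternative unintended winning route—that nothing outside the goal collapser can produce a $1024$, that the activation sequence of variables really is forced to proceed in the order $x_1,\dots,x_n,x_{n+1}$, and that no "cheating" sequence of moves can align the two $1024$s without first satisfying every clause column. Observation \ref{obs:merges_2x2}, Lemma \ref{lemma:one_merge}, and Lemma \ref{lemma:hv_move_lock} together do the heavy lifting here, because they prevent the same row or column from being reused and forbid undoing earlier horizontal commitments. Once those rigidity lemmas are invoked, the equivalence "$\varphi$ is satisfiable $\iff$ a $2048$ tile is reachable" follows, and since $\texttt{3-SAT} \le_p \texttt{2048-GAME}$, the theorem is proved.
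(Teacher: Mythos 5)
Your proposal is correct and follows essentially the same route as the paper: the paper's proof likewise argues that by Lemma \ref{lemma:one_merge} and the construction the only way to create a $2048$ tile is to merge the two $1024$ tiles, and then invokes Lemma \ref{lemma:1024_align} for the equivalence with satisfiability of the \verb|3-SAT| instance. Your version is more explicit about the polynomial size of the instance and the two directions of correctness, but the key lemmas and the overall structure are the same.
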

\begin{proof}
By Lemma \ref{lemma:one_merge} each row and column can witness at most one merge event. Then by construction, the only way a tile of $2048$ is generated is by merging of the two $1024$ tiles. That implies we can generate a tile of value $2048$ if and only if the two $1024$ tiles are aligned with each other. By Lemma \ref{lemma:1024_align}, this would mean we can achieve a tile of value $2048$ if and only if all of the clauses are satisfied. This completes our reduction from \verb|3-SAT|, and proves \verb|2048-Game| is NP-hard.
\end{proof}

\begin{theorem}\label{theorem:2048_NP}
$2048 \in NP$ \cite{2048_np}.
\end{theorem}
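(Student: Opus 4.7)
To establish $\verb|2048-GAME| \in \mathrm{NP}$, I would follow the standard guess-and-verify paradigm. The certificate for a yes-instance is a sequence of moves $\sigma = s_1 s_2 \cdots s_L$ with each $s_i \in \{\leftarrow,\rightarrow,\uparrow,\downarrow\}$, and the verifier replays $\sigma$ on the given board, performs the deterministic slide-and-merge update after each symbol, and accepts iff a tile of value $2^k$ appears at some point along the way. Each move can be executed in $O(nm)$ time by a single pass over the rows or columns, so verification costs $O(Lnm)$; it therefore suffices to show that $L$ can be taken polynomial in the input size.

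The length bound rests on two conservation facts. First, every move preserves the total sum of tile values, and this sum is polynomial in the bit-length of the input since the initial tiles are part of the input. Second, the number of tiles on the board is non-increasing along any play and strictly decreases by one at each merge event. Consequently, every play performs at most $nm - 1$ merges in total, which already bounds the ``productive'' moves polynomially.

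The remaining work is to bound the number of \emph{non-merging} moves between merges. The key observation is that a non-merging move simply shoves every tile as far as possible in the chosen direction, so two consecutive non-merging moves along the same axis are redundant: the second overwrites the tile positions set up by the first. Iterating this, I would show that any maximal merge-free run of moves is equivalent, with respect to its resulting configuration, to a short canonical form of bounded length consisting of alternating horizontal and vertical pushes. Hence the entire play can be normalised to a sequence of total length $L = O(nm)$, which combined with the polynomial-time verifier establishes membership in $\mathrm{NP}$.

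The main obstacle is the shortening lemma for merge-free runs, since naively a chain of several shifts might appear necessary to set up a later alignment. I would attack this by defining an equivalence on board configurations (``mutually reachable by merge-free play''), proving that each equivalence class admits a canonical representative obtained by at most a constant number of directional pushes, and then rewriting the given play so that between any two consecutive merges only this canonical block of moves appears. Once that structural lemma is in place, the length bound, and hence the NP membership, follows by routine bookkeeping.
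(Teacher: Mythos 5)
The paper does not actually give a proof of this theorem; it cites \cite{2048_np} for NP membership (an earlier draft of a proof along exactly your lines survives, commented out, in the source). Your overall architecture---certificate equals a move sequence, verifier replays it in $O(Lnm)$ time, at most $nm-1$ merges because each merge destroys a tile---is correct and matches that intended argument.

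The genuine gap is the one you yourself flag: the shortening lemma for merge-free runs. Your elimination of same-axis repetitions is fine (a non-merging slide preserves the left-to-right order of tiles within each row and the top-to-bottom order within each column, so a second push along the same axis depends only on data the first push did not change). But after that reduction you are left with an alternating sequence of non-merging pushes $H,V,H,V,\dots$, and each such push can genuinely move tiles to new rows and columns; the board under alternating compaction is a deterministic dynamical system on a finite but a priori exponential state space, so ``eventually periodic'' by itself yields no polynomial bound on the transient plus period. Your proposed fix---that every class of mutually merge-free-reachable configurations has a canonical representative reachable in $O(1)$ directional pushes---is strictly stronger than what you need, and you offer no argument for it; it is not clear why a configuration reachable only after many alternating compactions should be reachable in constantly many. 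What is actually required is \emph{any} polynomial bound on the length of a useful merge-free run (the commented-out proof asserts, also without justification, that each tile's orbit under the alternating dynamics has period at most $n^2$, giving $n^4$ moves between merges). Until such a bound is established, your claimed normalisation to $L=O(nm)$, and hence the NP membership, does not follow from what you have written.
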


From theorem \ref{theorem:2048_NP} and theorem \ref{theorem:2048_nphard} we have the following
\begin{corollary}
\verb|2048-Game| is NP-Complete
\end{corollary}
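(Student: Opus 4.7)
The plan is simply to combine Theorem~\ref{theorem:2048_NP} and Theorem~\ref{theorem:2048_nphard} through the definition of NP-completeness. By definition, a decision problem is NP-complete precisely when it lies in NP and is NP-hard (i.e.\ every language in NP admits a polynomial-time many-one reduction to it). Both ingredients are already in hand, so the proof should collapse to a single invocation of the two prior results.

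More concretely, I would first cite Theorem~\ref{theorem:2048_NP}, which (via \cite{2048_np}) places \verb|2048-GAME| in NP by exhibiting a polynomially bounded certificate, namely a winning move sequence whose length is polynomial in the board size. I would then cite Theorem~\ref{theorem:2048_nphard}, which supplies the polynomial-time reduction from \verb|3-SAT| developed through the gadgets of Section~4 and the correctness arguments of Section~5. The conjunction of these two facts is, by definition, NP-completeness.

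The closest thing to an ``obstacle'' here is a purely bookkeeping one: I would want to check that both ingredient results speak about the same formalization of \verb|2048-GAME|, namely a deterministic initial configuration on a rectangular board of polynomial size, with no random tile generation during play and the target of producing a tile of value $2^{k}$. Provided the NP-membership argument cited as Theorem~\ref{theorem:2048_NP} applies to this generalized setting rather than only to the original $4 \times 4$ game, the corollary follows tautologically. I would therefore write the proof as a single sentence: by Theorem~\ref{theorem:2048_NP} the problem lies in NP, by Theorem~\ref{theorem:2048_nphard} it is NP-hard, and hence it is NP-complete.
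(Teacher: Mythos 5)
Your proposal matches the paper exactly: the corollary is stated as an immediate consequence of Theorem~\ref{theorem:2048_NP} (membership in NP, cited from \cite{2048_np}) and Theorem~\ref{theorem:2048_nphard} (NP-hardness via the \texttt{3-SAT} reduction), combined through the definition of NP-completeness. Your bookkeeping caveat about ensuring both results refer to the same generalized formulation is a reasonable point of care, but the paper treats this as immediate and offers no further argument.
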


\section{The Case for \emph{Threes!}}
We propose a similar approach to proving the hardness of \emph{Threes!}. We generalize the game to a rectangular board of arbitrary size and disallow the generation of new tiles. Note that in the original \emph{Threes!} a tile of value \emph{1} or \emph{2} is generated in an empty cell. As for the background lattice, the situation is much simpler in \emph{Threes!} as \emph{1} tiles do not merge and we do not need to worry about parity.

In the remainder of this section, we design gadgets similar to the ones we used for \emph{2048} and discuss the difficulties that arise when dealing with persistent gaps. An implementation of these gadgets is also provided for the reader's convenience as outlined in Appendix \ref{app3}.
 
\subsection{Variable Gadget}
The variable gadget is a group of tiles with value $3$ in the same row as shown in the bottom half of Figure \ref{fig:Threes_variable_gadget}. We encode a \verb|TRUE|  assignment by a right merge and \verb|FALSE| assignment as a left merge. A right move merges the pair of $3$s in the same row. This brings the bottom $3$ in alignment with the middle $3$, in the connector gadget. Note that a left move would bring the middle $3$ in alignment with the top one, in the connector gadget. Hence regardless of the assignment of $x_1$ we can activate the next variable gadget $x_2$ (top half of Figure \ref{fig:Threes_variable_gadget}), by a single $\downarrow$ move, as two of the $3$ tiles in the connector gadget merge to form a single tile of value $6$, thereby bringing the tiles of $x_2$ in alignment.

\subsection{Clause Gadget}
Each clause is represented by a column with a special tile with a value of $6$ at the top. Every positive literal is represented by a pair of tiles with value  $3$, one of them being to the north west of the other, for example the middle pair of tiles in Figure \ref{fig:Threes_clause_gadget}. Every negative literal has one tile to the north east of the other. The figure gives an example of a clause $(\bar{x_1}\wedge{x_2}\wedge{\bar{x_3}})$. It's easy to see that once  a positive literal is assigned \verb|TRUE|, or if a negative literal is assigned a value of \verb|FALSE|, the tiles corresponding to the literal align up in the same column. Hence the $6$ at the top of each clause column can be brought one step down by performing a $\downarrow$ move and merging two of the tiles with value $3$. Such a merge corresponds to the satisfaction of a clause. Note that once a clause has been satisfied by suitable assignment of a literal, we do not care about aligning the tiles of the rest of the literals. Figure \ref{fig:Threes_clause_gadget_example} shows a larger example.

\begin{figure}[htb]
\includegraphics[width=0.8\textwidth]{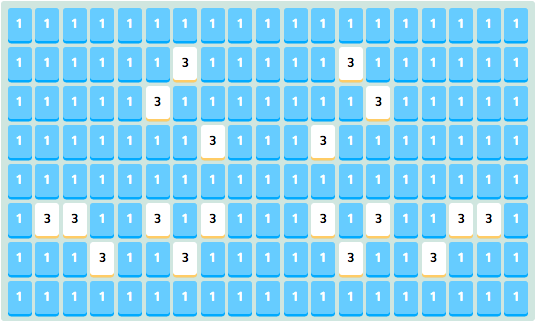}
\caption{Variable Gadget.}\label{fig:Threes_variable_gadget}
\end{figure}

\begin{figure}[htb]
\includegraphics[width=0.2\textwidth]{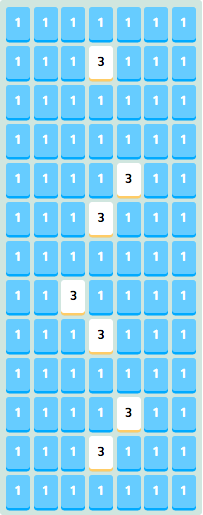}
\caption{Clause Gadget.}\label{fig:Threes_clause_gadget}
\end{figure}

\begin{figure}[htb]
\includegraphics[width=0.8\textwidth]{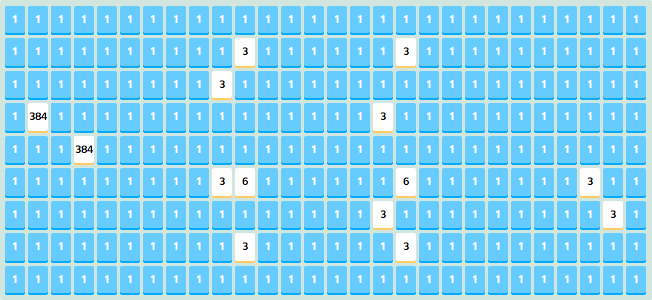}
\caption{Checker Gadget with 2 clauses. The right clause is satisfied.}\label{fig:Threes_checker_gadget}
\end{figure}

\subsection{\emph{Threes!} Reduction}

While in \emph{Threes!}, we can use a much simpler lattice, the dynamics allow gaps to persist. This could be problematic if it allows merging tiles in a way that achieves the goal where the original \verb|3-SAT| instance is not satisfiable. For example, such \text{non-canonical} merges could be used to satisfy more clauses without using their literal connections or allow a variable to act as both true and false. To remedy this situation, we spread apart clause gadgets such that the gaps created at the top row as clauses are satisfied are not allowed to travel to other clause gadgets. Other than that, the gadgets look pretty similar to the ones used for $2048$.

Figure \ref{fig:Threes_variable_gadget} shows the variable gadget. Unlike the one for $2048$, variable induced merges occur one at a time. Still, the gadget for \emph{Threes!} ensures a consistent assignment of variables, and the next variable is activated as in $2048$. The clause gadget is shown in Figure \ref{fig:Threes_clause_gadget}.

We use a similar key-lock gadget to verify that all clauses are satisfied. When this holds, special tiles line up to allow the creation of an arbitrary large value, i.e. greater than 12. This event corresponds to the solution to the decision problem at hand. Figure \ref{fig:Threes_checker_gadget} shows the checker gadget for \emph{Threes!} for both satisfied and unsatisfied clauses. Note that part of the key gadget is embedded in the clause column and only shifts down to align with its key when the clause is satisfied. Fully unlocking the gadget requires two downward moves, which is enabled by a 2-level collapser gadget, as shown in Figure \ref{fig:Threes_checker_gadget}.

With the properties of the reduction for $2048$ in mind, we only need to show a few more properties for this new reduction.

\begin{observation}
Gaps cannot travel far and may not interfere with the construction.
\end{observation}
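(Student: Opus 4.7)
The plan is to exploit the one-step-at-a-time movement of \emph{Threes!} together with the deliberate spacing between clause gadgets to bound how far any gap can propagate, and then to argue that within that bounded region no harmful merge can be triggered.

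First, I would catalogue where gaps originate. By construction, the only merges available in a clean run of the reduction are those inside the variable, connector, clause, and key-lock gadgets, and each such merge creates exactly one new empty cell at a predictable position (e.g.\ the top of a clause column when that clause is satisfied, or within a connector once a variable has been assigned). This localizes the set of possible ``gap sources'' to a polynomial number of known sites. Unlike \emph{2048}, sliding does not send these gaps to the boundary; but unlike \emph{2048}, a single move translates any given tile by at most one cell, so every gap also moves by at most one cell per move.

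Next, I would bound how much useful travel a gap can undergo before being sealed. A gap only persists while there is a tile on its trailing side that can be pushed into it; once the column or row to one side of the gap is exhausted (a $1$-tile, a lattice boundary, or a mismatched pair that cannot merge), the gap is pinned. Combined with the one-step-per-move dynamics and with the analogue of Lemma~\ref{lemma:one_merge} for \emph{Threes!} (no row or column can support more than a small constant number of merges before running out of matching pairs), this yields an explicit upper bound $D$ on the distance any single gap can translate before becoming inert. I would then take the horizontal and vertical spacing between any two distinct clause gadgets (and between any clause gadget and the variable/connector/lattice region) to exceed $D$, padding with neutral lattice tiles; since $D$ is polynomial, this costs only polynomially many extra cells.

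Finally, I would verify that within this padding a wandering gap cannot instigate a \emph{non-canonical} merge. The only way a gap interferes is by bringing two matching tiles into adjacency that were not intended to meet, but every tile outside the designated merge sites is either a lattice $1$ (which never merges), a tile whose unique match lies on the far side of the padding region, or a tile already consumed by an earlier canonical merge. Thus a gap's one-cell shifts through the padding cannot line up a new pair. I expect the main obstacle to be this last step: ruling out subtle interactions where a gap from one satisfied clause permits a tile in a neighboring clause or connector to slide into an unintended alignment. Handling it cleanly will require a careful case analysis on the directional moves $\{\leftarrow,\rightarrow,\uparrow,\downarrow\}$ applied after a gap is opened, showing in each case either that the gap is immediately resealed by the lattice or that no matching partner lies within its reachable range.
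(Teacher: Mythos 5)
Your proposal follows essentially the same route as the paper: gaps arise only from merges, the number of merges is polynomially bounded, hence the distance any gap can travel is bounded, and padding the gadgets beyond that bound isolates them from one another. One caveat: your claim that a gap moves at most one cell per move is not quite right for \emph{Threes!} --- when a row shifts, every tile in a contiguous segment moves one cell, so the gap can relocate to the far end of that segment in a single move; what actually bounds a gap's penetration into a gadget is the paper's point that each further step of the gap in the direction of the swipe requires yet another gap (hence another merge) ahead of it, so the travel distance is controlled by the merge count rather than by a per-move displacement bound. Your third step, explicitly ruling out non-canonical merges inside the padding region, is a verification the paper's proof does not carry out and would strengthen the argument if completed.
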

\begin{proof}
Without loss of generality, consider a gap exists in the top most row which the player may attempt to move rightwards.  In order to do so, the player must shift the row to the right and in order to do that, there has to be another gap further down the same row.  In fact, for every step the gap moves to the right, there must be a gap further to the right into which we can slide the row.  

These gaps are only created when a merge occurs inside of the construction and, as there is an obvious bound on the number of merges possible (given a polynomial number of tiles) there is a bound on the distance any gap may travel. We can simply pad our gadgets, creating enough space between them so that no gap may travel from one gadget to another.  This would guarantee that no gaps created at one gadget may interfere with the operation of other gadgets.
\end{proof}

With that, the following results seem to be within reach.

\begin{conjecture}\label{theorem:Threes_nphard}
$THREES$ is $NP$-hard.
\end{conjecture}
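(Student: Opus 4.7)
The plan is to mirror the \verb|2048-GAME| reduction of Theorem~\ref{theorem:2048_nphard}, assembling the \emph{Threes!} variable, clause, connector, key-lock, and checker gadgets of the preceding section into a board whose size is polynomial in the number $n$ of variables and the number $m$ of clauses of an input \verb|3-SAT| formula $\varphi$. The goal tile would be a single cell of value larger than $12$ produced inside the checker gadget, and I would aim to show that such a tile is reachable from the initial configuration if and only if $\varphi$ is satisfiable.

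First I would establish the canonical-play analogues of Section~5. Starting from the initial board, the only productive opening is the horizontal move assigning $x_1$, and each subsequent variable $x_{i+1}$ can be activated only by a $\downarrow$ move that follows the assignment of $x_i$ through the connector gadget. Combined with the observation just proved that gaps cannot travel far, this should force play into an alternating horizontal/vertical rhythm in which every variable is assigned exactly once, every clause column either fires a single $\downarrow$ merge (when at least one of its literals has been aligned by a satisfying assignment) or stays inert, and the auxiliary key variable eventually unlocks the checker row. A goal-producing play would then, as in Lemma~\ref{lemma:1024_align}, have pulled every clause column down and released the lock, which certifies a satisfying assignment of $\varphi$.

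The main obstacle, and the reason the statement is currently phrased as a conjecture, is the persistent-gap behaviour peculiar to \emph{Threes!}. Because rows and columns shift only one cell per move, a gap created when a clause fires does not simply slide off the board, and could in principle be exploited to realign tiles across gadgets in a non-canonical way. Two bad scenarios must be ruled out: a clause column firing its downward merge without any of its literals actually being satisfied, and a variable row behaving inconsistently as both \verb|TRUE| and \verb|FALSE|. I would handle both via a padding argument: the total number of merges is bounded by the tile count, so the total number of gaps ever in play is likewise bounded, and spacing the gadgets apart by more than this bound guarantees that no gap created at one gadget can reach another before the game ends. Within a single gadget, the asymmetric north-west versus north-east placement that distinguishes positive from negative literals should give a one-merge-per-row invariant analogous to Lemma~\ref{lemma:one_merge}.

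Once these invariants are in place, the equivalence ``goal tile producible iff $\varphi$ satisfiable'' should follow exactly as in Theorem~\ref{theorem:2048_nphard}: any satisfying assignment is realised by the canonical alternating sequence of moves, while the invariants force any winning play to correspond to a consistent satisfying assignment. The step I expect to be hardest, and most likely to require genuinely new arguments beyond the \emph{2048} proof, is rigorously ruling out non-canonical merges in the presence of persistent gaps; the padding argument is plausible but still needs to be verified case by case against the one-step-at-a-time dynamics of \emph{Threes!}, particularly at gadget boundaries inside the checker gadget where clause shifts and key-lock shifts must interact in the intended order.
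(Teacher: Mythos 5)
Your proposal follows essentially the same route the paper itself sketches: reuse the \emph{2048} reduction with the \emph{Threes!} variable, clause, and key-lock gadgets, and control persistent gaps by padding the gadgets far enough apart that no gap can travel between them. Note, however, that the paper deliberately leaves this statement as a conjecture rather than a theorem, precisely because the step you flag as hardest --- rigorously ruling out non-canonical merges under the one-step-at-a-time dynamics --- is not carried out there either; so your outline matches the paper's intended strategy, including the gap it leaves open.
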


\begin{conjecture}\label{theorem:Threes_NP}
$THREES \in NP$.
\end{conjecture}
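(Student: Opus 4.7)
The plan is to produce a polynomial-size certificate consisting of a winning sequence of moves, which the verifier simulates step-by-step in polynomial time. Since the game is fully deterministic once the moves are fixed and no random tiles are generated in our generalization, NP membership reduces to showing that every YES instance admits a winning move sequence of length polynomial in $mn$.

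First, I would bound the number of merge events in any play. Every merge strictly decreases the total number of tiles on the board, and the initial configuration carries at most $mn$ tiles, so at most $mn - 1$ merges can occur. The second and main step is to bound the number of moves between consecutive merges (and before the first one). I would argue that any maximal run of moves performing no merge can be replaced by an equivalent run of length polynomial in $mn$. The key dynamical fact is that, in the absence of merges, a move in direction $d$ simply shifts each affected row or column by one cell toward $d$, which yields: (i) any sequence of $\ell$ consecutive non-merging moves in the same direction has the same effect as at most $\max(m,n)$ of them, since the affected rows or columns saturate against the wall; and (ii) when non-merging moves alternate direction, the induced action on configurations (which must preserve the fixed tile multiset throughout a non-merging segment) has orbit size polynomial in $mn$. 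Combining (i) and (ii), each inter-merge segment of a minimal winning play has length $O(\mathrm{poly}(mn))$, and multiplying by the merge bound yields a polynomial total certificate length.

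The main obstacle is establishing (ii) rigorously. Unlike the $2048$ case sketched in the commented argument for Theorem~\ref{theorem:2048_NP}, Threes slides only one step per move and allows interior gaps to persist; consequently the identities $XX = X$ and $\uparrow\downarrow = \downarrow$ exploited for $2048$ both fail, and one cannot immediately shortcut move sequences by collapsing repetitions or by cancelling opposite directions. A plausible route is to observe that a single row under a sequence of horizontal non-merging moves evolves through a chain of at most $\max(m,n)$ distinct left-to-right arrangements (parameterised by how many leading/trailing cells are gaps), and that interleaving vertical moves couples the per-row chains with the symmetric per-column chains without amplifying the joint orbit beyond a polynomial bound. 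Making this coupling argument precise, while handling the gap-propagation phenomenon already invoked in the reduction, is the crux of promoting Conjecture~\ref{theorem:Threes_NP} to a theorem.
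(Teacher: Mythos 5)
The statement you are proving is left as a conjecture in the paper --- no proof is given --- so there is no ``paper proof'' to match; the only relevant point of comparison is the (commented-out) argument sketched for Theorem~\ref{theorem:2048_NP}, and you correctly observe that its two key identities ($XX=X$ and $\uparrow\downarrow=\downarrow$) fail under the one-step-at-a-time dynamics of \emph{Threes!}. Your overall strategy (certificate $=$ move sequence; at most $mn-1$ merges; bound each merge-free segment polynomially) is the natural one and step (i) is fine. But step (ii) is a genuine gap, and you say so yourself. As stated it is also not merely unproven but dubious: a merge-free segment is driven by an arbitrary interleaving of four different moves, not by iterating a single map, so there is no ``orbit'' to bound; the correct excision argument only shows that a minimal merge-free segment visits pairwise distinct configurations, and the number of configurations sharing a fixed tile multiset on an $m\times n$ board is exponential, so no polynomial bound follows from that observation alone. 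Since \emph{Threes!} permits interior gaps to persist and to migrate one cell per move, a merge-free segment could in principle shuttle gaps around for a long time before the next merge becomes available, and nothing in your sketch rules this out.

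To close the gap you would need a genuinely new ingredient rather than a tightening of the coupling language: for instance, a monovariant (a potential function with polynomially many values that changes monotonically under every \emph{useful} non-merging move), or a normal-form lemma showing that any minimal winning play can be rewritten so that each merge-free segment consists of at most $O(\max(m,n))$ moves in each direction, or a restriction to the instances actually produced by the reduction (where the paper's padding argument confines gaps near their gadget of origin), which would prove NP membership of the reduced instances but not of \texttt{Threes-GAME} in general. Until one of these is carried out, the statement should remain a conjecture, exactly as the paper leaves it.
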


\appendix

\section{A Note on the PSPACE-Completeness of \emph{2048}}\label{app1}
We spotted a number of issues in the constructions of \cite{2048_pspace}, which have been communicated to and confirmed by the authors. Namely, due to the irreversible nature of merging tile blocks used to encode directed edges of NCL graphs, the reduction is effectively creating bounded NCL games. Such games are only NP-Complete, not PSPACE-Complete as the paper was aiming. Aside from that, tiles are \textit{sometimes} assumed to move in a manner which is different from the original game and the activation sequences of different gates are not of the same length which would complicate things as each move is applied to the whole board.

The generalization we use in this paper is arguably simpler than the one used in \cite{2048_pspace} as we do not use any tiles of value higher than \emph{1024}. A more important distinction is that we disable the generation of new tiles while in \cite{2048_pspace} new tiles of specific values had to be generated at specific locations to make the reduction work.

\section{Playable Gadgets}\label{app3}
We provide playable versions of our gadgets by adapting open source implementations of the games. We use the codes created by \href{https://github.com/CyberZHG/2048}{CyberZHG} and \href{https://github.com/angelali/threesjs}{angelali}. The playable gadgets can be accessed on the following links: \href{http://cs.umd.edu/~akader/projects/alb/internal/index.html}{2048}, \href{http://cs.umd.edu/~akader/projects/alb/internal3/index.html}{Threes!}.

\section{Additional Figures}\label{app2}

\begin{figure}[htb]
\includegraphics[width=0.8\textwidth]{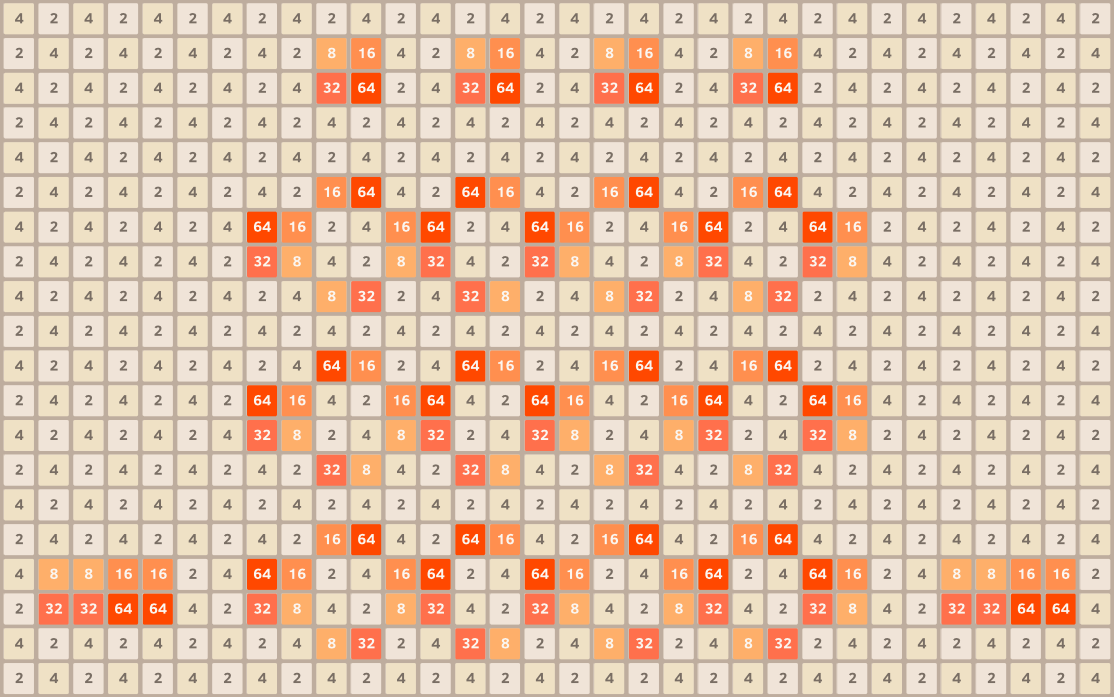}
    \caption{Example of the clause gadget for 2048 having 4 clauses.}\label{fig:clause_gadget_example}
\end{figure}

\begin{figure}[htb]
\includegraphics[width=0.8\textwidth]{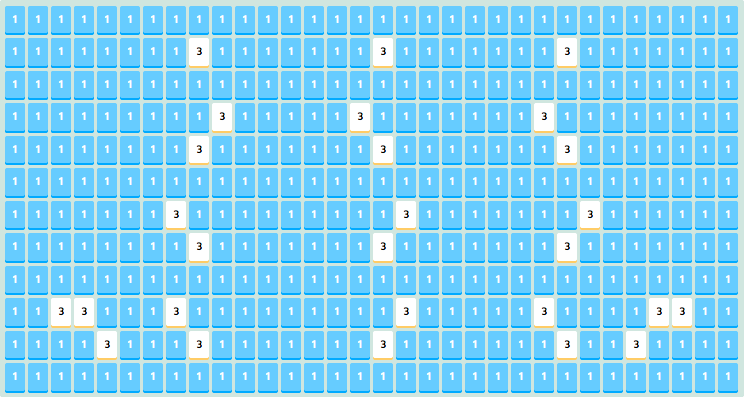}
\caption{Example of the clause gadget for \emph{Threes!} having 3 clauses.}\label{fig:Threes_clause_gadget_example}
\end{figure}

\bibliographystyle{amsplain}
\bibliography{refs}

\end{document}